%
%
\def\version{\today}	        	%
%
%

\documentclass[reqno,11pt]{amsart} 
\usepackage[utf8]{inputenc}
\usepackage{amsmath} 
\usepackage[mathscr]{eucal}
\usepackage{amssymb}
\usepackage{srcltx} 
\usepackage{dsfont}
\usepackage{hyperref}
\usepackage{color}
\usepackage{enumerate}
\usepackage{tikz,pgfplots}
\usetikzlibrary{arrows, positioning}
\usepackage{comment}
\usepackage{lscape}
\usepackage{graphicx}
\usepackage{tabularx}
\usepackage{caption}
\usepackage{diagbox}
\usepackage{subcaption}
\captionsetup[subfigure]{labelfont=rm}
\pgfplotsset{compat=1.11}
\usepgfplotslibrary{fillbetween}
\usetikzlibrary{intersections}
\pgfdeclarelayer{bg}
\pgfdeclarelayer{ft}
\pgfsetlayers{bg,main,ft}

\numberwithin{equation}{section}








\newfam\Bbbfam 
\font\tenBbb=msbm10 
\font\sevenBbb=msbm7 
\font\fiveBbb=msbm5 
\textfont\Bbbfam=\tenBbb 
\scriptfont\Bbbfam=\sevenBbb 
\scriptscriptfont\Bbbfam=\fiveBbb

\def\2{\mathbf 2}


\newcommand{\R}     {\mathbb{R}} 
\newcommand{\Z}     {\mathbb{Z}} 
\newcommand{\N}     {\mathbb{N}} 
 
\newcommand{\D}     {\mathbb{D}}

\def\1{{\mathchoice {1\mskip-4mu\mathrm l}      
		{1\mskip-4mu\mathrm l} 
		{1\mskip-4.5mu\mathrm l} {1\mskip-5mu\mathrm l}}} 
 
\def\comment#1{} 
\newtheoremstyle{thm}{2ex}{2ex}{\itshape\rmfamily}{} 
{\bfseries\rmfamily}{}{1.7ex}{} 

\newtheoremstyle{rem}{1.3ex}{1.3ex}{\rmfamily}{} 
{\itshape\rmfamily}{}{1.5ex}{}


\newtheorem{theorem}{Theorem}[section] 
 
\newtheorem{prop}[theorem] {Proposition}

\theoremstyle{definition}
\newtheorem{defn}[theorem] {Definition} 
\newtheorem{example}[theorem] {Example}

\newtheorem{remark}[theorem]{Remark} 
\newtheorem{assumption}{Assumption}

%
 
%





\newcommand{\Mcal}   {{\mathcal M }}

\newcommand{\Ucal}   {{\mathcal U }}

\newcommand{\Xcal}   {{\mathcal X }}

\definecolor{Red}{rgb}{1,0,0}



\setlength{\textheight}{8.4in} 
\setlength{\textwidth}{6.6in} 
\setlength{\topmargin}{0in} 
\setlength{\headheight}{0.12in} 
\setlength{\headsep}{.40in} 
\setlength{\parindent}{1pc} 
\setlength{\oddsidemargin}{-0.1in} 
\setlength{\evensidemargin}{-0.1in} 

\marginparwidth 40pt 
\marginparsep 0pt 
\oddsidemargin-5mm 
\topmargin -30pt 
\headheight 12pt 
\headsep 15pt 
\footskip 15pt 
\textheight 670pt 
\textwidth 170mm 
\columnsep 10pt 
\columnseprule 0pt 
\sloppy 
\parskip 0.8ex plus0.3ex minus0.2ex 
\parindent1.0em

\begin{document} 
	
	\title[The canonical equation of adaptive dynamics with power-law mutation rates]
	{The canonical equation of adaptive dynamics in individual-based models with power-law mutation rates}
	\author[Tobias Paul]{}
	\maketitle
	\thispagestyle{empty}
	\vspace{-0.5cm}
	
	\centerline{Tobias Paul{\footnote{HU Berlin, Rudower Chaussee 25, 12489 Berlin, {\tt tobias.paul.1@hu-berlin.de}}}}
	\renewcommand{\thefootnote}{}
	\vspace{0.5cm}
	
	\bigskip
	
	\centerline{\small(\version)} 
	\vspace{.5cm} 
	
	\begin{quote} 
		{\small {\bf Abstract:}} In this paper, we consider an individual-based model with power-law mutation probability. In this setting, we use the large population limit with a subsequent ``small mutations'' limit to derive the canonical equation of adaptive dynamics. For a one-dimensional trait space this corresponds to well established results and we can formulate a criterion for evolutionary branching in the spirit of Champagnat and Méléard (2011). However, for more complex models higher dimensional trait spaces are required model various aspects of coexisting individuals without simplifying potential trade-offs. In higher dimensional trait spaces, we find that the speed at which the solution of the canonical equation moves through space is reduced due to mutations being restricted to the underlying grid on the trait space. However, as opposed to the canonical equation with rare mutations, we can explicitly calculate the path which the dominant trait will take without having to solve the equation itself.
	\end{quote}
	
	\bigskip\noindent 
	{\it MSC 2010.} 92D25.
	
	\medskip\noindent
	{\it Keywords and phrases.} Canonical equation of adaptive dynamics, evolutionary branching, adaptive dynamics, mutation, trait substitution sequence, coexistence

	\setcounter{tocdepth}{3}
	

	\setcounter{section}{0}
	\begin{comment}{
	This is not visible.}
	\end{comment}
	
	\section{Introduction}
	\label{Sec: Intro }
	
	When considering stochastic population models with mutation, we want to understand which mutations are successful and how the sequence of successful mutations behaves over time. First ideas for a single equation describing this sequence were heuristically given by \cite{MGM+96} and \cite{DL96}, discussed in the setting of Markov processes in \cite{CFB01} and later made rigorous by \cite{CM11}. They consider the ``rare'' mutation regime, where the probability for a mutation at birth $u_K$ satisfies \begin{align}
	e^{-VK}\ll u_K\ll\frac{1}{K\log K}\label{eq: rare mutations}
	\end{align}
	for all $V>0$, where $K$ is a scaling parameter for the population size which in this context is called \emph{carrying capacity}. Here, we write $f(K)\ll g(K)$ if $f(K)/g(K)\to 0$ as $K\to\infty$. This scaling leads, with $K\to\infty$, to the \emph{trait substitution sequence} as was shown in \cite{C06}. As the name implies, this limit describes a sequence of traits which dominate the population. In particular, at each point in time there is a unique dominating trait. When one takes the subsequent limit for the radius of mutation (i.e. a parameter determining how much a mutant trait differs from the parental trait) $\sigma\to 0$, the trait substitution sequence converges weakly to the solution of the canonical equation of adaptive dynamics \cite{CM11}. While this approach of taking limits successively is mathematically convenient, it does not allow to make any claims about the quality of approximation for a fixed set of parameters since we do not know the relation of $\sigma$, $u_K$ and $K$. This issue was resolved by \cite{BBC17} showing that a suitable scaling of $\sigma$ also depending on $K$ allows to consider the simultaneous limit and recover the same limiting equation.\\
	
	The scaling \eqref{eq: rare mutations} of the mutation probability is called ``rare'' because in the limiting process we see mutations invading the current population successively and any advantageous mutation immediately replaces the previous dominant trait before another mutation can invade. Recently, individual-based models with higher mutation probabilities, namely $u_K=K^{-\alpha}$ for some $\alpha\in(0,1)$, have been investigated. We will refer to this regime as ``power-law'' mutations. Since mutations are so much more frequent, we will require the set of possible traits which can be attained by any sequence of mutations to be finite. Then, as was shown to be true on arbitrary finite graphs by \cite{CKS21}, a logarithmic scaling of time gives rise to convergence of the exponents $\beta_x^K(t)$ of the population sizes $K^{\beta_x^K(t)}-1$ to piecewise affine functions $\beta_x(t)$ for all traits $x$ and suitable times $t\geq 0$. This result already appeared earlier in a specific model of \cite{CMT21} considering mutations and horizontal gene transfer. Similar observations regarding the connection between the exponents of a population and linear behaviour on a $\log$-time scale were already made by Durrett and Mayberry \cite{DM11} in a population genetic setting with fixed population size, but were also described in e.g.~\cite{BCS19} for estimating times to cross fitness valleys or \cite{S17} in the context of recurrent mutations. For a general overview of results for both ``rare'' and ``power law'' mutations, we refer to \cite{CMT23}. However, the canonical equation has not been discussed yet for models with power-law mutations.\\
	
	In this paper, we take the first step to obtain the canonical equation of adaptive dynamics while considering a power-law mutation probability for the underlying individual-based model beginning with the case of taking successive limits. More precisely, we will first let the mutation probability and population size tend to $0$ and $\infty$ respectively simultaneously with $K\to\infty$. Then, we let the exponent of the mutation probability $\alpha\to 1$ and only thereafter we let the radius of mutation $\sigma$ go to $0$. We are taking three limits as opposed to only two limits in the ``rare'' mutation regime in order to obtain better control over the times at which the dominating trait in the population changes. While it is desirable to obtain a corresponding result for general $\alpha\in(0,1)$, the dynamics of the large population limit are difficult to handle and hence we leave this for future work. As we will see in Section \ref{Sec: 1dim}, as long as we are in a one-dimensional trait space, we recover exactly the canonical equation that we would expect from translating the canonical equation with ``rare'' mutations into our setting.
	
	However, in order to incorporate different facets of a trait into a model, one needs higher dimensional spaces with each dimension describing one feature of a trait. Such aspects may be the reproduction rate, mortality rate,  tolerance to competitive pressure, ability to perform horizontal gene transfer or phenotypic switching \cite{BB18, BPT23}. While it is possible to consider all of these features in a one-dimensional model, this requires significant simplification of the trade-offs between the different aspects of a trait. For spaces of dimension strictly larger than $1$, the canonical equation becomes only piecewise differentiable and the speed of evolution through space is reduced. This is a contrast to the canonical equation with rare mutations where there is a closed formula independent of the dimension of the trait space. Therefore, we need to treat higher dimensions separately and will do so extensively for the special case of two dimensions in Section \ref{Sec: 2dim}. Differently from the canonical equation with rare mutations, we get an explicit description of the path of the solution of the canonical equation from the fitness function.
	
	One peculiarity of the canonical equation with power-law mutation rates is the fact that there is no term corresponding to mutational variance $\sigma^2$ which one sees frequently in other models \cite{DL96, CFB01, CH23}. This observation is rooted in the deterministic nature of the large population limit. While the rare mutation regime results in the large population limit in the polymorphic evolution sequence \cite{CM11} which is a stochastic process, the power-law mutation regime has a deterministic limit as outlined above. Therefore, there is no more randomness involved in determining the next resident trait which carries over to the small mutation limit.

	\section{The canonical equation in one dimension}\label{Sec: 1dim}
	In this section, we derive the canonical equation as the small mutation limit of an individual-based model with power-law mutations in a one-dimensional trait space. For the purpose of this section, let $I\subseteq \R$ be an interval and let $\Xcal = I\cap\delta\Z$ be the grid of size $\delta>0$ on this interval. For now we will suppress the dependency on $\delta$ in our notation but we will point to it when it becomes important. We consider a population composed of individuals with traits in the trait space $\Xcal$. The individual dynamics depend on their trait in the following way: Let $K\in\N$ and $\alpha\in(0,1)$ be fixed. \begin{itemize}
		\item At rate $b(x)\geq 0$ the individual with trait $x\in\Xcal$ gives birth to another individual which in general carries the parental trait. However, with probability $K^{-\alpha}$ the offspring mutates with equal probability to the traits $x+\delta$ or $x-\delta$. If the mutant trait is not contained in $\Xcal$, the offspring carries the parental trait.
		\item At rate $d(x)\geq 0$, the individual with trait $x\in\Xcal$ dies.
		\item At rate $c(x,y)/K\geq 0$, the individual with trait $x$ experiences competition from an individual with trait $y$ and dies as a result of this competitive event.
	\end{itemize}
	These are the standard individual dynamics in the adaptive dynamics framework where we only adjusted the mutation probability to be a power law $u_K=K^{-\alpha}$. We are interested in the dynamics of the population sizes as $K\to\infty$ and more specifically in the sequence of traits which dominate the population in a suitable sense. Following \cite{CKS21}, we denote the number of individuals carrying trait $x\in\Xcal$ at time $t\geq 0$ by $N_{x,\alpha}^K(t)$. Under a suitable rescaling of time, the population sizes behave approximately exponentially with base $K$. Therefore, we define the exponents $\beta^K_{x,\alpha}$ describing the population size of trait $x\in\Xcal$ via the relation \[
	N_{x,\alpha}^K(t \log K) = K^{\beta^K_{x,\alpha}(t)}-1\quad\Longleftrightarrow\quad \beta_{x,\alpha}^K(t) := \frac{\log(1+ N_{x,\alpha}^K(t\log K))}{\log K}. 
	\]
	Going forward, we also need to introduce the notion of fitness. To this end, denote by $\mathbf{v}\subseteq \Xcal$ a set of traits in the trait space. The traits are said to $\emph{coexist}$ if the mutation free Lotka-Volterra system \[
	\dot n_y(t) = n_y(t)\left(b(y)-d(y)-\sum_{x\in\mathbf{v}} c(y,x)n_x(t)\right),\qquad y\in\mathbf{v}
	\]
	has a unique coordinate-wise strictly positive equilibrium which we denote by $\bar{n}(\mathbf{v})$. The invasion fitness of a trait $y\in\Xcal$ in a population composed of coexisting traits $\mathbf{v}$ is defined as \begin{align}
	f(y,\mathbf{v})=b(y)-d(y)-\sum_{x\in\mathbf{v}}c(y,x)\bar{n}_x(\mathbf{v}).\label{eq: fitness}
	\end{align}
	In the cases where $\mathbf{v}=\{x\}$ is a singleton, we also write $f(y,x)$ instead of $f(y,\{x\})$. The function $f(y,\mathbf{v})$ describes the initial rate of growth of individuals with trait $y$ since it is the rate of growth of the approximating branching process. Hence, if $f(y,\mathbf{v})>0$, trait $y$ is \emph{fit} and can invade against the traits in $\mathbf{v}$ whereas if $f(y,\mathbf{v})<0$, the trait $y$ will go extinct almost surely. We call $y\in\Xcal$ a \emph{mutant trait} of the traits $\mathbf{v}$ if \[
	\mathrm{dist}(\mathbf{v},y):=\min_{x\in\mathbf{v}}\mathrm{dist}(x,y)>0,
	\]
	where $\mathrm{dist}(x,y)$ denotes the number of mutations needed to reach trait $y$ by a sequence of mutations started from $x$. This can also be seen as the length of the shortest path from $x$ to $y$ in the directed graph $(V,E)$ with vertex set $V=\Xcal$ and edges given by the possible mutations. We then arrive at a theorem detailing the limiting functions $\beta_{x,\alpha}$, which we want to sketch here.
	\begin{theorem}[Theorem 2.2 in \cite{CKS21}]\label{Thm: CKS21}
		Let $\mathbf{v}_0\subseteq\Xcal$ be a set of coexisting traits. 
		 Then under suitable further assumptions, the functions $\beta_{x,\alpha}^K(t)$ converge as $K\to\infty$ in the space of càdlàg paths on the time interval $[0,T\wedge T_0]$ for any time $T>0$ and a time $T_0$ determined later to a piecewise affine function $\beta_{x,\alpha}$ which is constructed in the following way:
		 \begin{itemize}
		 	\item We define the sequence of invasion times recursively by setting $s_0:=0$ and for $k\geq 1$ \[
		 	s_k := \inf\{t>s_{k-1}\mid \exists y_k\in\Xcal\setminus\mathbf{v}_{k-1} : \beta_{y_k,\alpha}(t)=1\},
		 	\]
		 	where $\mathbf{v}_k$ denotes the set of coexisting traits from $\mathbf{v}_{k-1}$ and the trait $y_k$ as detailed in the above definition of the time $s_{k}$. These are the traits which have strictly positive coordinates in the unique equilibrium of the mutation free Lotka-Volterra system for the traits $\mathbf{v}_{k-1}\cup\{y_k\}$. If there is no unique trait $y_k$ or no unique equilibrium, we set $T_0:= s_{k}$.
		 	\item We define \[
		 	\beta_{x,\alpha}(t)=\left(\max_{u\in\Xcal}\left(\beta_{u,\alpha}(s_{k-1})+(t-t_{u,k}\wedge t)f(u,\mathbf{v}_{k-1})-\alpha \mathrm{dist}(u,x)\right)\right)\vee 0\quad\text{for}\quad s_{k-1}\leq t \leq s_k
		 	\]
		 	where the time $t_{u,k}$ is defined as \[
		 	t_{u,k}=\begin{cases}
		 		\inf \{t\geq s_{k-1}\mid \beta_{u,\alpha} = \alpha\},&\quad\text{if }\beta_{x,\alpha}(s_{k-1})=0\\
		 		s_{k-1},&\quad\text{otherwise}.
		 	\end{cases}
		 	\]
		 	\item If $t_{u,k}=s_k$ for some $u\in\Xcal$ or $\beta_{x,\alpha}(s_k)=0$ and $\beta_{x,\alpha}(s_k-\varepsilon)>0$ for all sufficiently small $\varepsilon>0$, the construction is stopped and $T_0:=s_k$.
		 \end{itemize}
	 	
	\end{theorem}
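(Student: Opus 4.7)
The plan is to proceed by induction on the invasion phases $k\ge 0$. The core ingredients are: (i) a law-of-large-numbers approximation for the macroscopic resident populations, which on the logarithmic time scale stay within any $\eps$-neighbourhood of the Lotka--Volterra equilibrium $\bar n(\mathbf{v}_{k-1})$, and (ii) a branching-process coupling for all subpopulations that are exponentially small in $K$, together with careful bookkeeping of the immigration produced by mutations from larger subpopulations. These are the standard ingredients (as in \cite{C06,CM11}) augmented by the power-law contribution $K^{-\alpha}$ coming from each mutation step.

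I would first treat the base case on $[0,s_1]$. Starting from a coexisting set $\mathbf{v}_0$ of macroscopic size, a Fournier--M\'el\'eard style argument shows that on the logarithmic time scale the resident densities stay close to $\bar n(\mathbf{v}_0)$ with probability tending to $1$. For every non-resident trait $x$, the process $N_{x,\alpha}^K$ is sandwiched between birth-and-death processes with per-capita drift arbitrarily close to $f(x,\mathbf{v}_0)$ and with immigration rate $\asymp K^{\beta_{u,\alpha}(t)-\alpha}$ contributed by each neighbour $u$ of $x$ on the mutation graph. Iterating this sandwich along shortest mutation paths yields the $\dist(u,x)$ dependence: the contribution reaching $x$ from an ancestor $u$ behaves like $K^{\beta_{u,\alpha}(s_{k-1})+(t-t_{u,k})f(u,\mathbf{v}_{k-1})-\alpha\,\dist(u,x)}$ once the intermediate traits have themselves reached exponent $\alpha$ (which is exactly what the definition of $t_{u,k}$ encodes: a trait must be ``populated enough'' before it can effectively feed its descendants). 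Taking the maximum over $u$ and truncating at $0$ yields the claimed formula on $[s_0,s_1]$.

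For the inductive step, at the hitting time $s_k$ some $\beta_{y_k,\alpha}$ first reaches $1$, so $N_{y_k,\alpha}^K$ reaches macroscopic size. A comparison with the Lotka--Volterra flow on $\mathbf{v}_{k-1}\cup\{y_k\}$ shows that the enlarged resident population relaxes within a further $O(\log K)$ span of real time to the equilibrium of its unique coexisting subset $\mathbf{v}_k$, while the non-coexisting traits return to microscopic size. One can then restart the argument from $t=s_k$ with the new community $\mathbf{v}_k$ and the updated initial exponents $\beta_{\cdot,\alpha}(s_k)$, closing the induction. The stopping time $T_0$ is defined exactly where this scheme breaks down: when two traits would reach exponent $1$ simultaneously, or when the new Lotka--Volterra system fails to have a unique coexisting equilibrium.

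The main technical obstacle is making the coupling uniform over all traits $x\in\Xcal$ and over the entire phase $[s_{k-1},s_k]$ simultaneously, since the immigration terms feeding $x$ depend on the (random) exponents of all its ancestors in the mutation graph. I would handle this by a simultaneous induction over the auxiliary times $t_{u,k}$, ordered by their magnitude, using concentration estimates of Bennett/Bernstein type for the stochastic immigration and standard large-deviations bounds for the branching approximations, together with a discrete-time bootstrap that upgrades control from one ancestor to the next along the finite graph $\Xcal$. The finiteness of $\Xcal$, together with genericity assumptions on $b,d,c$ that prevent ties in the maximum defining $s_k$, ensures that only finitely many such estimates need to be combined before the phase ends.
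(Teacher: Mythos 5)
The paper does not prove this statement: it is Theorem~2.2 of \cite{CKS21}, quoted as background, and the present paper uses it as a black box (it explicitly says it has ``omitted'' the technical details and assumptions). So there is no in-paper proof to compare your sketch against.

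That said, as a reconstruction of the strategy actually used in \cite{CKS21} your outline is faithful. The proof there does proceed by induction over the invasion phases $[s_{k-1},s_k]$, uses a law-of-large-numbers/Lotka--Volterra approximation for the macroscopic residents and a birth-and-death (branching with immigration) sandwich for every trait whose size is $K^{\beta}$ with $\beta<1$, and the $\alpha\,\dist(u,x)$ attenuation arises precisely from iterating the $K^{-\alpha}$ mutation cost along shortest paths in the mutation graph once each intermediate trait has crossed the exponent-$\alpha$ threshold (which is what $t_{u,k}$ encodes). Your handling of the inductive step via relaxation of the enlarged Lotka--Volterra system to the equilibrium of the coexisting subset $\mathbf{v}_k$, and of $T_0$ as the first time the construction becomes ambiguous, also matches. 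Two remarks: (i) the assumption you gloss over as ``genericity'' is, concretely, $1/\alpha\notin\N$; this is what the paper singles out as essential, because it rules out subpopulations sitting exactly at a critical exponent when the resident community changes, and your bootstrap along the ordered times $t_{u,k}$ implicitly relies on it; (ii) one should also note that $\beta_{u,\alpha}(s_{k-1})$ appearing in the maximum can itself have dropped to $0$ during an earlier phase, so the $\vee\,0$ truncation has to be propagated through the induction, not applied only at the end. Neither point is a gap so much as a detail your sketch compresses.
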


	\begin{remark}
		Note that the functions $\beta_{x,\alpha}$ are well defined by the recursive structure for the times $s_k$. For the definition of the times $s_k$ we always require a unique trait $y$ to have an exponent reaching $1$. If there were multiple such traits, we could not control the equilibrium population size since it might depend on the starting condition of the approximating Lotka-Volterra system.
	\end{remark}
	\begin{remark}
		The functions $\beta_{x,\alpha}$ do not depend on the specific choice of the underlying mutation probabilities. That is, if there are two mutation kernels $M(x,\cdot)$ and $\tilde{M}(x,\cdot)$ determining the distribution of the trait of a mutant offspring of $x$ which for each $x\in\Xcal$ have positive mass on the same set but with possibly different values, then the limiting functions $\beta_{x,\alpha}$ will be identical. In particular, the variance of the mutation kernels is not visible in the large population limit.
	\end{remark}
	
	 We can use this convergence theorem for new terminology. We call traits $x\in\Xcal$ with limiting exponent $\beta_{x,\alpha}(t)=1$ \emph{resident} at time $t$.
 	There is a large number of technical details and assumptions related to this theorem that we have omitted for the sake of briefness. One of the main assumptions is $1/\alpha\notin\N$ since this would cause problems in the stochastic setting concerning the emergence of new traits and extinction of old traits when there is a change in the resident population. However, since we have taken the large population limit already and hence have a deterministic function, we can now take the subsequent limit $\alpha\to 1$ by setting $\alpha=1$ in the formula for $\beta_{x,\alpha}$. As we will see in a moment, this greatly helps our cause in deriving the canonical equation of adaptive dynamics in this setting. The functions $\beta_x:=\beta_{x,1}$ now take the form \begin{align}
 	\beta_x^\delta(t):=\begin{cases}
 		 [\beta_x^\delta(s_{k-1})+(t-s_{k-1})f(x,\mathbf{v}_{k-1})]\vee 0,&\quad\text{if } \mathrm{dist}(\mathbf{v}_{k-1},x)=1 \text{ or } \beta_x^\delta(s_{k-1})>0\\
 		 0 &\quad\text{otherwise.}
 	\end{cases}\label{eq: PES}
 	\end{align}
 	for $t\in[s_{k-1},s_k]$, $k\in\N$.  Here, we use the notation $\beta_x^\delta$ to highlight the dependence of the limiting function on the underlying $\delta$-grid in trait space.
    This definition allows us to determine the time until the next exponent reaches $1$, i.e.~ the time until the next mutation successfully invades the population explicitly by defining \[
 	s_k:=s_{k-1}+\min_{x\in\mathbf{v}_{k-1}}\min\left(\frac{1-\beta^\delta_{x+\delta}(s_{k-1})}{f_+(x+\delta,\mathbf{v}_{k-1})},\frac{1-\beta^\delta_{x-\delta}(s_{k-1})}{f_+(x-\delta,\mathbf{v}_{k-1})}\right).
 	\]
 	Here, $f_+:=\max(f,0)$ is the non-negative part of $f$ and we use the convention that dividing by $0$ yields $\infty$. If we had not let $\alpha\to 1$ previously, we would not be able to get this simple representation. This is due to possible changes in fitness when there is a change in the resident trait causing a change in slope. Another reason is the possibility of secondary mutants (i.e mutants of mutant traits) to have a high fitness which then could lead to a jump in the sequence of resident traits. This would require us to pay attention not only to the current resident trait, but also to the exponent size of all other traits which is difficult to handle.\\
 	
 	\begin{example}\label{Ex: minimum example}
 		We consider a minimum working example to illustrate the differences between $\beta_{x,\alpha}$ and $\beta_{x,1}$. Let $\Xcal=\{0,1,2\}$, $b(x)=1+x$, $d(x)=0$ and $c(x,y)=1$ for all pairs $(x,y)\in\Xcal^2$. Further we consider the case in which $\alpha = 3/4$. We readily calculate the invasion fitness to be \[
 		f(1,0) = 1,\quad f(2,0) = 2,\quad f(0,1) = -1, \quad f(2,1) = 1, \quad f(0,2)=-2 \quad\text{and}\quad f(1,2)=-1.
 		\]
 		Assuming the trait $0$ to be initially resident, the functions $\beta_{x,3/4}$ take the form displayed in Figure \ref{Fig: Prelimit}.
 		
 		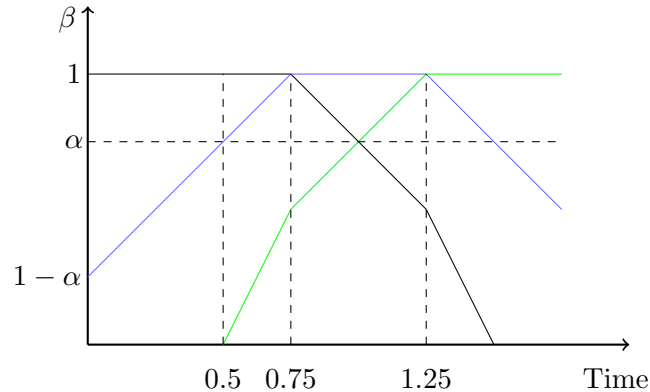
\begin{figure}[htbp]
 			\begin{tikzpicture}[scale=0.9]
 				\draw[->, thick] (0,0) -- (8,0);
 				\draw[->, thick] (0,0) -- (0,5);
 				\draw[black] (0,4) -- (3,4) -- (5,2) -- (6,0) ;
 				\draw[blue!60] (0,1) -- (3,4) -- (5,4) -- (7,2);
 				\draw[green] (2,0) -- (3,2) -- (5,4) -- (7,4) ; 
 				\draw[dashed] (0,3) -- (7,3) ; 
 				\draw[dashed] (2,0) -- (2,4) ;
 				\draw[dashed] (3,0) -- (3,4) ; 
 				\draw[dashed] (5,0) -- (5,4) ; 
 				\draw (-0.2,4) node {$1$};
 				\draw (-0.2,3) node {$\alpha$};
 				\draw (-0.6,1) node {$1-\alpha$};
 				\draw (2,-0.5) node {$0.5$};
 				\draw (3,-0.5) node {$0.75$};
 				\draw (5,-0.5) node {$1.25$};
 				\draw (7.8,-0.5) node {Time};
 				\draw (-0.3,4.8) node {$\beta$};
 				
 			\end{tikzpicture}
 			\caption{Sketch of the functions $\beta_{x,3/4}$. Black is $\beta_{0,3/4}$, blue is $\beta_{1,3/4}$ and green is $\beta_{2,3/4}$. Trait $2$ emerges when trait $1$ reaches the exponent $\alpha$ and there is a subsequent change in the fitness of trait $2$ at the time trait $1$ becomes resident. }
 			\label{Fig: Prelimit}
 		\end{figure}
 	During the residency of trait $0$, the slope of trait $2$ is twice as steep as that of trait $1$. In more extreme cases (e.g. choosing $b(x)=1+x^3$) it may hence happen that trait $1$ does not get resident at all and instead trait $2$ becomes the resident trait after trait $0$. Letting $\alpha\to 1$ prohibits both the change in slope and the possibility of skipping a trait due to a large fitness advantage of the mutant of a mutant. This is illustrated in Figure \ref{Fig: Limit}. 
 	\begin{figure}[htbp]
 		\begin{tikzpicture}[scale=0.9]
 		\draw[->, thick] (0,0) -- (11,0);
 		\draw[->, thick] (0,0) -- (0,5);
 		\draw[black] (0,4) -- (4,4) -- (8,0) ;
 		\draw[blue!60] (0,0) -- (4,4) -- (8,4) -- (10,2);
 		\draw[green] (4,0) -- (8,4) -- (10,4) ; 
 		\draw (-0.2,4) node {$1$};
 		\draw (4,-0.5) node {$1$};
 		\draw (8,-0.5) node {$2$};
 		\draw[dashed] (4,0) -- (4,4) ;
 		\draw[dashed] (8,0) -- (8,4) ;
 		\draw (10.8,-0.5) node {Time};
 		\draw (-0.3,4.8) node {$\beta$};
 	\end{tikzpicture}
 	\caption{Sketch of the functions $\beta_x$. Black is $\beta_0$, blue is $\beta_1$ and green is $\beta_2$.}
 	\label{Fig: Limit}
 	\end{figure}
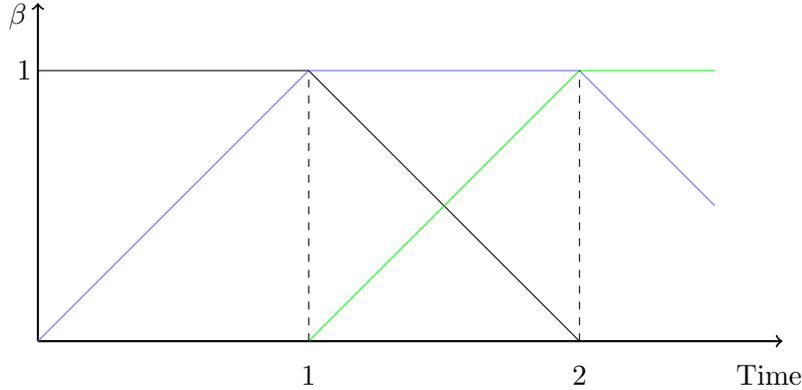
 	\end{example}
 	
 	In order to arrive at the canonical equation, we will have to take a step back and consider monomorphic populations. While the above definitions hold for any arbitrary number of coexisting traits, we will now only consider the case of one resident trait (i.e. at each time $t\in(s_{k-1},s_k)$ there is exactly one trait $x\in\Xcal$ with $\beta_x(t)=1$) as was already the case in Example \ref{Ex: minimum example}. We introduce the following definitions.
 	\begin{defn}
 		 A trait $x\in\Xcal$ with $f(x+\delta,x)>0$ and $f(x-\delta,x)>0$ is called \emph{local fitness minimum}.
 		 A trait $x\in\Xcal$ with $f(x+\delta,x)<0$ and $f(x-\delta,x)<0$ is called \emph{local fitness maximum}.
 	\end{defn}
 	Now, suppose that the starting trait $x_0\in I$ (and without loss of generality (w.l.o.g.) $x_0\in\Xcal_\delta$ for all $\delta>0$, else consider a shifted grid) is in the interior of $I$ and neither at a local minimum nor maximum. Then there will be one distinguished invading trait (w.l.o.g.~$x_0+\delta$, otherwise it would be $x_0-\delta$) and it will take time $1/f(x_0+\delta,x_0)$ until successful invasion by definition of the times $s_k$. This will trigger a change in the resident trait or it will lead to coexistence. We will assume for now that the population remains monomorphic, so there is always a change in the resident trait, as is usually assumed in order to obtain a trait substitution sequence. However, due to the nature of our model -- more precisely the faster mutation rates -- it may be that the trait $x$ goes to extinction very slowly and becomes fit again after a succession of resident changes. We will exclude this case by imposing the following assumption.
 	
 	\begin{assumption}\label{asmpt: well defined}
 		For any trait $x\in\Xcal_\delta$ with times $t_1<t_2$ such that $\beta_x^\delta(t_1)=1$ and $\beta_x^\delta(t_2)<1$, there exists a time $t_3>t_2$ such that $\beta_x^\delta(t)<1$ for all $t\in[t_2,t_3]$ and $\beta_x^\delta(t_3)=0$. 
 	\end{assumption}
 
 	While this assumption might seem fairly restrictive, it is easily verified that for constant competition $c(x,y)\equiv c$ any rate functions $b$ and $d$ lead to functions $\beta_x^\delta$ which satisfy this criterion.
 	
 	Since we have a sequence of resident traits $x$, we can define a function $g_\delta:[0,\infty)\to\Xcal_\delta$ denoting the unique resident trait (i.e. the trait with exponent $1$). At the times $s_k$ where the resident trait changes and there is no unique resident trait, we use the càdlàg version of $g_\delta$. Without loss of generality, we will assume the function $g_\delta$ to be monotonically increasing remaining consistent with the assumption of the invading trait following $x_0$ to be $x_0+\delta$. Then we can write \[
 	g_\delta(t)=x_0+\max_{t\geq\sum_{k=0}^{i-1}\frac{1}{f(x_0+(k+1)\delta,x_0+k\delta)}} i\delta.
 	\]
 	Note that for smooth fitness functions $f$ the jump times of $g_\delta$ indicating the times until a new mutant indades the population successfully will get longer as we approach a local fitness maximum since the invasion fitness will tend towards $0$. With this definition of the trait substitution sequence $g_\delta$, we can formulate our theorem on the canonical equation of adaptive dynamics with power-law mutation rate in one dimension.
 	
 	\begin{theorem}\label{Thm: CEAD}
 		Assume the set-up introduced above and let the functions $b, d:I\to\R$ and $c:I^2\to\R$ be twice continuously differentiable. Further, let the single starting trait $x_0\in I$ not be a local fitness maximum or minimum. Then, for all $T>0$, the sequence of functions $g_\delta(\cdot/\delta^2)$ converges uniformly as $\delta\to 0$ in the space of càdlàg paths $\mathbb{D}([0,T],\R)$ to the unique function $x\colon [0,T]\to I$ which solves \begin{align}
 		\frac{\mathrm{d}x}{\mathrm{d}t}=\partial_1f(x(t),x(t))\label{eq: CEAD}
 		\end{align}
 		with initial condition $x(0)=x_0$.
 	\end{theorem}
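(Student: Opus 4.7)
The plan is to exploit the fact that, under the assumption of monomorphic evolution, the trait substitution sequence $g_\delta$ is an \emph{explicit} piecewise constant function whose jump times are reciprocals of invasion fitnesses, and to analyse the resulting formula by Taylor expansion and Riemann-sum convergence. The crucial algebraic identity driving everything is that, for a monomorphic resident $x$, the equilibrium density equals $\bar n_x=(b(x)-d(x))/c(x,x)$, and hence the self-fitness vanishes: $f(x,x)=0$. Combined with the $C^2$-regularity of $b$, $d$, $c$, a Taylor expansion around $(x,x)$ yields
\begin{equation*}
f(x+\delta,x)=\delta\,\partial_1 f(x,x)+O(\delta^2),
\end{equation*}
uniformly in $x$ on compact subsets of $I$.

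First I would show that this expansion forces $g_\delta$ to behave in the expected monotone fashion. Since $x_0$ is neither a local fitness maximum nor minimum, exactly one of $f(x_0\pm\delta,x_0)$ is positive for all $\delta$ small enough; WLOG $f(x_0+\delta,x_0)>0$, which forces $\partial_1 f(x_0,x_0)>0$. By continuity, $\partial_1 f(y,y)>0$ on a compact neighbourhood of the trajectory, so the unique invading mutant is always the $+\delta$-neighbour. Using Assumption~\ref{asmpt: well defined} to rule out back-invasion of previously extinct traits, iterating \eqref{eq: PES} yields
\begin{equation*}
g_\delta(s)=x_0+k\delta\quad\text{for }s\in[s_k,s_{k+1}),\qquad
s_k=\sum_{j=1}^{k}\frac{1}{f(x_0+j\delta,\,x_0+(j-1)\delta)}.
\end{equation*}

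Next I would insert the Taylor identity into the formula for the rescaled jump times $\delta^2 s_k$. This gives
\begin{equation*}
\delta^2 s_k=\delta\sum_{j=1}^{k}\frac{1}{\partial_1 f\bigl(x_0+(j-1)\delta,\,x_0+(j-1)\delta\bigr)}+O(k\delta^2),
\end{equation*}
where the $O(k\delta^2)$ error is uniform as long as $\partial_1 f(y,y)$ is bounded away from $0$ along the trajectory. When $k\delta$ stays bounded, the main term is a Riemann sum converging uniformly to $\int_{x_0}^{x_0+k\delta}\d y/\partial_1 f(y,y)$, and the error vanishes. Writing $\tilde g_\delta(t):=g_\delta(t/\delta^2)$, this identifies the limiting time-to-position relation as
\begin{equation*}
t=\int_{x_0}^{\lim\tilde g_\delta(t)}\frac{\d y}{\partial_1 f(y,y)},
\end{equation*}
which is equivalent by differentiation to the canonical equation \eqref{eq: CEAD}; local Lipschitz continuity of $x\mapsto\partial_1 f(x,x)$ on compacts (a consequence of $C^2$-regularity) gives uniqueness of the ODE solution and therefore identifies the limit.

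Uniform convergence on $[0,T]$ in the Skorokhod space follows from the uniform Taylor estimate and uniform Riemann-sum convergence for the continuous integrand $1/\partial_1 f(y,y)$, together with the fact that the jump sizes of $\tilde g_\delta$ are $\delta\to0$. The main obstacle I expect is not the analytic limit itself but rather the verification, uniformly in small $\delta$, that the hypotheses underlying the explicit formula for $g_\delta$ persist throughout $[0,T]$: namely that the population remains monomorphic (no coexistence is triggered along the way), that Assumption~\ref{asmpt: well defined} continues to hold as the resident changes, and that the trajectory remains in the interior of $I$ at a positive distance from evolutionary singularities $\{\partial_1 f(y,y)=0\}$. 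Once those structural conditions are secured, the Riemann-sum analysis is routine, and the absence of a mutational-variance term in \eqref{eq: CEAD} is a direct consequence of the deterministic nature of the pre-limit provided by Theorem~\ref{Thm: CKS21}.
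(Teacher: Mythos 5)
Your proposal is correct and reaches the same ODE, but by a genuinely different route than the paper. The paper's proof linearly interpolates $g_\delta$ between jump times, computes the one-sided derivatives of the time-accelerated interpolant $\hat g_\delta(t)=\tilde g_\delta(t/\delta^2)$ at $\hat g_\delta^{-1}(y)$ to be $f(y\pm\delta,y\mp\delta)/\delta\to\partial_1 f(y,y)$, and then writes $\hat g_\delta'=\partial_1 f(\hat g_\delta,\hat g_\delta)+r(\hat g_\delta,\delta)$ a.e.\ with $r$ uniformly small, closing the argument with a Gronwall estimate. You instead make the identity $f(x,x)=0$ and the resulting Taylor expansion $f(x+\delta,x)=\delta\,\partial_1 f(x,x)+O(\delta^2)$ the centrepiece, substitute it into the explicit jump-time formula $s_k=\sum_{j=1}^k 1/f(x_0+j\delta,x_0+(j-1)\delta)$, and recognise the rescaled times $\delta^2 s_k$ as Riemann sums for $\int_{x_0}^{x_0+k\delta}\mathrm{d}y/\partial_1 f(y,y)$. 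This parametrises the limit by position rather than by time and identifies it via its inverse, with differentiation and monotonicity recovering the ODE. What the Riemann-sum route buys is directness: the uniform error control is done once, in the time variable, and there is no need to argue separately about a.e.\ differentiability of the interpolant or to invoke Gronwall. What the paper's route buys is that the Gronwall machinery is the standard template for ODE perturbation and ports to settings where an explicit inverse formula is unavailable (as in the two-dimensional case). Both proofs rely on the same hypotheses: interior trajectory, $\partial_1 f(y,y)$ bounded away from zero along the trajectory to keep the sum controlled, and the structural assumptions (monomorphy, Assumption~\ref{asmpt: well defined}) to secure the explicit form of $g_\delta$. You are right to flag the latter as the genuinely delicate point; the paper also assumes these rather than proving them from the primitives.
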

 \begin{proof}
 	Suppose that $y\in I$ is such that at time $s_k$ we have $g_\delta(s_k)=y$ and at the next change of resident trait we have $g_{\delta}(s_{k+1})=y+\delta$. Then we interpolate by setting \[
 	\tilde{g}_\delta(t)=y+\delta f(y+\delta,y)\cdot (t-s_k)\quad\text{for }t\in[s_k,s_{k+1}].
 	\]
 	In particular, this function coincides with $g_\delta$ at the times $s_k$ when the resident trait changes and $g_\delta$ jumps. Since the $L^\infty$ distance of $g_{\delta}(\cdot/\delta^2)$ and $\tilde{g}_\delta(\cdot/\delta^2)$ is bounded by $\delta$, the uniform convergence of the interpolated function and the original function are equivalent. Hence, we will now only consider the function $\tilde{g}_\delta$. Note that $\tilde{g}_\delta$ is a strictly increasing continuous function until the time where $s_{k+1}=\infty$ for some $k\in\N$ and therefore absolutely continuous. We now set $\hat{g}_\delta(t) = \tilde g_\delta(t/\delta^2)$ and compute the right derivative of $\hat{g}_\delta$ at time $\hat{g}_\delta^{-1}(y)$ for $y\in\Xcal^\delta$
 	to be \[
 	\hat g_\delta^{'+}(\hat g_\delta^{-1}(y))=\lim\limits_{h\downarrow 0}\frac{	\hat g_\delta(\hat g_\delta^{-1}(y)+h)-\hat g_\delta(\hat g_\delta^{-1}(y))}{h} = \frac{y+\frac{f(y+\delta,y)}{\delta}\cdot h-y}{h}=\frac{f(y+\delta,y)}{\delta}
 	\]
 	and similarly the left derivative is \[
 	\hat g_\delta^{'-}(\hat g_\delta^{-1}(y))=\frac{f(y,y-\delta)}{\delta}.
 	\]
 	Note that as $\delta\to 0$ these expressions agree and yield $\partial_1f(y,y)=-\partial_2f(y,y)$. For $y\notin\Xcal^\delta$ we find that the function is differentiable with \[
 	\hat g_\delta'(\hat g_\delta^{-1}(y)) = \frac{f(\lfloor y\rfloor_\delta+\delta,\lfloor y\rfloor_\delta)}{\delta}
 	\]
 	where $\lfloor y \rfloor_\delta$ is the largest element of $\delta\Z$ less than or equal to $y$. Since $f$ is continuously differentiable in both variables on a compact set, for any sufficiently small $\varepsilon>0$ we find a $\delta_0$ such that for all $\delta<\delta_0$ we have \[
 	 \left| \frac{f(\lfloor y\rfloor_\delta+\delta,\lfloor y\rfloor_\delta)}{\delta} - \frac{f(y+\delta,y)}{\delta} \right|<\varepsilon
 	\] for all $y\in\Xcal^\delta$. In particular, the limits as $\delta\to 0$ agree up to an error of $\varepsilon$ in a neighbourhood of $y$. Since for any $y\in I$ we can find a sequence $\tilde{\delta}_n\to 0$ such that $y\in\tilde{\delta}_n\Z$, the limits are identical.
 	Hence, taking the limit $\delta\to 0$ of the derivatives, we observe \[
 	\lim\limits_{\delta\to 0}\hat g_\delta'(\hat g_\delta^{-1}(y)) = \partial_1f(y,y) = x'(x^{-1}(y))
 	\]
 	which coincides with the derivative of the solution of the ODE in the statement of the theorem. Hence, the limiting function $g_0(t)$ of $\tilde{g}_\delta(t/\delta^2)$ (and hence of $g_\delta(t/\delta^2)$) solves the ODE \eqref{eq: CEAD}. Existence and uniqueness for a solution of the ODE \eqref{eq: CEAD} follow from $\partial_1f(y,y)$ being Lipschitz on $I$ with Lipschitz constant $L\geq 0$. By uniqueness of the solution of the ODE, we have $g_0=x$. To show uniform convergence, we see that for every $\delta>0$ we have \[
 	\hat{g}_\delta' = \partial_1f(\hat{g}_\delta,\hat{g}_\delta)+r(\hat{g}_\delta,\delta)
 	\]
 	almost everywhere, where $r(y,\delta)$ is a bounded function in $y$ and converges uniformly to $0$ as $\delta\to 0$ by uniform convergence of the differential quotient due to compactness of $I$. Thus, \begin{align*}
 		|\hat{g}_\delta(t)-x(t)|&\leq \int_{0}^{t} |\partial_1f(\hat{g}_\delta(s),\hat{g}_\delta(s))-\partial_1f(x(s),x(s))|+|r(\hat{g}_\delta(s),\delta)|\ \mathrm{d}s\\
 		&\leq \int_{0}^t L|\hat{g}_\delta(s)-x(s)| \ \mathrm{d}s + M(\delta)t,
 	\end{align*}
 where $M$ is a function such that $|r(y,\delta)|\leq M(\delta)$ and $M(\delta)\to 0$ as $\delta\to 0$.
    Uniform convergence follows using Gronwall's inequality.
 \end{proof}
 
 \begin{remark}
 	We have assumed throughout that the function $g_\delta$ and hence $\tilde{g}_\delta$ are monotonically increasing. Our result also applies for monotonically decreasing functions by considering the invading traits to be $x-\delta$ instead of $x+\delta$. 
 \end{remark}
\begin{remark}\label{Rem: comparison}
	In \cite{CM11}, the canonical equation reads \[
	\frac{dx}{dt}=\int_{\R^\ell}h[h\cdot\nabla_1g(x,x)]_+m(x,h)dh.
	\]
	Here, $g$ is a function related to the fitness $f$ and $m$ is the mutation kernel. In one dimension ($\ell=1$), this would mean in our setting that $m(x,h)=\delta_1(h)$ is the point measure on $1$ and $g$ is replaced by $f$. Then the equation reads \[
	\frac{dx}{dt} = \nabla_1f(x,x),
	\]
	which is the result that we have obtained. In contrast to this result, we require slightly more regularity on the functions which determine the fitness function $f$, because we want the derivative of $f$ with respect to the first component to be Lipschitz.
\end{remark}

\subsection{Evolutionary branching in one dimension}
Now that we have shown the result on the canonical equation, we can also consider evolutionary singularities, i.e. stationary points of the ODE \eqref{eq: CEAD}. Such points are also called evolutionary singular strategies in the literature. Here, these are the points $x\in I$ satisfying $\partial_1 f(x,x)=0$. We will show that the same criterion for evolutionary branching in the sense of Champagnat and Méléard \cite{CM11} will apply. We recall the definition and necessary assumptions here. \begin{defn}
	Let $x^*\in\Xcal$ be an evolutionary singularity, i.e. an equilibrium of equation \eqref{eq: CEAD}. We call \emph{$\eta$-branching} the event that satisfies \begin{itemize}
		\item There is a time $t_1$ such that there is a unique resident trait $x_0$ at time $t_1$ contained in the interval $[x^*-\eta,x^*+\eta]$.
		\item There is a time $t_2$ such that there is coexistence between exactly two traits distant more than $\eta/2$ at time $t_2$.
		\item In the time interval $[t_1,t_2]$, there are always at most two (coexisting) resident traits with increasing distance over time between the traits.
	\end{itemize}
\end{defn}

Since we are now dealing with coexistence of traits, we need to \emph{assume} that the corresponding Lotka-Volterra system has a unique stable equilibrium.
 \begin{assumption}\label{asmpt: stability}
	Suppose that $\mathbf{v}\subseteq\Xcal$ is a set of $k$ coexisting traits. Then we assume that for any mutant trait $y\notin\mathbf{v}$ with $f(y,\mathbf{v})>0$ the solution of the corresponding mutation free Lotka-Volterra system for the set of traits $\mathbf{v}\cup\{y\}$ converges to a unique equilibrium $n^*$ for any starting value in a sufficiently small neighbourhood $\Ucal\subseteq\R_{\geq 0}^{k+1}$ of $(\bar{n}(\mathbf{v}),0)$. Further assume that for all traits $x_j\in\mathbf{v}$ whose equilibrium coordinate in $n^*$ is $0$, we have $f(x_j,\mathbf{v}^*)<0$, where $\mathbf{v}^*$ is the collection of traits whose coordinate in $n^*$ is strictly positive.
\end{assumption}

With the notion of evolutionary branching in the sense of $\eta$-branching and assumption on the stability of equilibrium points, we can state our result which coincides with the one in \cite{CM11}.

\begin{prop}
	Assume the set-up from Theorem \ref{Thm: CEAD} and Assumption \ref{asmpt: stability}. In addition, assume the functions $b$ and $d$ to be three times and $c$ to be four times continuously differentiable. Further, let the functions $\beta_x^\delta$ from \eqref{eq: PES} start with the unique resident trait $x_0\in\Xcal$ and assume that the sequence of resident traits converges towards an evolutionary singularity $x^*\in\Xcal$ in the interior of $\Xcal$. Lastly assume that this singularity satisfies \begin{align*}
		\partial_{22}f(x^*,x^*)>\partial_{11}f(x^*,x^*)\quad\text{and}\quad \partial_{22}f(x^*,x^*)+\partial_{11}f(x^*,x^*)\neq 0.
	\end{align*}
	Then, for all sufficiently small $\eta>0$ there exists $\delta_0>0$ such that for all $\delta\leq\delta_0$ \begin{itemize}
		\item if $\partial_{11}f(x^*,x^*)>0$, there is  $\eta$-branching at $x^*$.
		\item if $\partial_{11}f(x^*,x^*)<0$, there is no $\eta$-branching at $x^*$.
	\end{itemize}
\end{prop}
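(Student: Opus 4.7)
The strategy I would follow adapts the Champagnat--Méléard branching argument to the power-law setting, using the explicit jump-time description \eqref{eq: PES} of the resident trajectory instead of the probabilistic trait substitution sequence. The starting point is a Taylor expansion of $f$ around $(x^*,x^*)$. Using $f(x,x)=0$ for every monomorphic resident, I differentiate along the diagonal to obtain $\partial_2 f(x^*,x^*)=-\partial_1 f(x^*,x^*)=0$ and $2\partial_{12}f(x^*,x^*)=-\partial_{11}f(x^*,x^*)-\partial_{22}f(x^*,x^*)$, so that with $a=y-x^*$ and $b=x-x^*$
\[
f(y,x)=\tfrac12(a-b)\bigl[\partial_{11}f(x^*,x^*)\,a-\partial_{22}f(x^*,x^*)\,b\bigr]+O\bigl((|a|+|b|)^3\bigr).
\]
Specialising to $y=x\pm\delta$ with $x=x^*+a$ yields
\[
f(x+\delta,x)\approx\tfrac{\delta}{2}\bigl[(\partial_{11}-\partial_{22})a+\partial_{11}\delta\bigr],\qquad f(x-\delta,x)\approx\tfrac{\delta}{2}\bigl[(\partial_{22}-\partial_{11})a+\partial_{11}\delta\bigr],
\]
and these two signs are the quantities I would track throughout.

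In the branching case $\partial_{11}f(x^*,x^*)>0$, the hypothesis $\partial_{22}>\partial_{11}$ makes the canonical equation attracting at $x^*$; combined with the assumption that the residents converge to $x^*$, $g_\delta$ enters any prescribed $O(\delta)$-neighbourhood of $x^*$ in finitely many jumps (taking $\delta$ small enough that the cubic remainder is negligible compared to $\delta^2$). At this scale both displayed quantities above become strictly positive, so at some step $k$ both neighbours of the current resident are simultaneously fit. Next I would use the coexistence computation
\[
f(x^*+\delta_1,x^*-\delta_2)\approx\tfrac12(\delta_1+\delta_2)(\partial_{11}\delta_1+\partial_{22}\delta_2),
\]
\[
f(x^*-\delta_2,x^*+\delta_1)\approx\tfrac12(\delta_1+\delta_2)(\partial_{11}\delta_2+\partial_{22}\delta_1),
\]
both strictly positive under $\partial_{11}>0$ and $\partial_{22}>\partial_{11}$ (in particular $\partial_{22}>0$), together with Assumption~\ref{asmpt: stability}, to conclude that the two mutants coexist as a stable Lotka--Volterra equilibrium. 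Iterating the same Taylor analysis evaluated at this two-trait equilibrium shows that after each step the \emph{outer} mutant of each current resident is fit while the \emph{inner} mutant is not, so the coexisting pair spreads apart in $\delta$-steps until the distance exceeds $\eta/2$, which delivers $\eta$-branching.

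In the non-branching case $\partial_{11}f(x^*,x^*)<0$, the same expansion combined with $\partial_{22}>\partial_{11}$ shows that as soon as the resident $x=x^*+a$ enters the $O(\delta)$-neighbourhood of $x^*$, both $f(x+\delta,x)$ and $f(x-\delta,x)$ become strictly negative, so by \eqref{eq: PES} no further invasion takes place and the resident is frozen at the nearest grid point. No coexistence is triggered and $\eta$-branching cannot occur for $\delta$ small enough.

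The main obstacle I anticipate is the iterated separation step after the first coexistence event: the jump-time formulas in \eqref{eq: PES} and the dominance analysis used in deriving them were stated for the monomorphic case, so for two coexisting traits I need to track the dependence of the Lotka--Volterra equilibrium on the perturbations $\delta_1,\delta_2$, verify Assumption~\ref{asmpt: well defined} along the way, and show that the sign pattern above propagates as long as both residents remain inside the $\eta$-neighbourhood. The non-degeneracy condition $\partial_{11}f(x^*,x^*)+\partial_{22}f(x^*,x^*)\neq 0$ enters here exactly to rule out configurations in which the leading-order Taylor expansion fails to determine the sign of a mutant's invasion fitness.
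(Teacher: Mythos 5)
The paper's proof of this proposition is a single-sentence citation to Champagnat--M\'el\'eard \cite[Theorem~4.10]{CM11}, observing that the branching criterion is a purely analytic statement about the invasion fitness $f$ and does not depend on the mutation rate or kernel beyond its support containing both positive and negative perturbations. Your sketch correctly reconstructs the Taylor-expansion and mutual-invadability computations that underlie that cited argument, and the gap you flag --- propagating the sign pattern along the two-trait Lotka--Volterra equilibrium so the coexisting pair spreads apart --- is precisely the technical core of the proof in \cite{CM11} that the paper invokes rather than reproduces.
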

\begin{proof}
	The result on evolutionary branching is an entirely analytic result using properties of the invasion fitness and doe not depend on a particular choice of a mutation kernel or mutation rate besides the mutation kernel having mass on the positive and negative real numbers. Therefore, we can apply the same proof as in \cite[Theorem 4.10.]{CM11}.
\end{proof}

\subsection{Examples}
	We want to give some examples for convergence to the CEAD in this setting as well as for evolutionary branching. To this end, we consider the example discussed in \cite{CM11} and proposed in \cite{DD99}. Here, the interval is $I=[-2,2]$, the birth rate is $b(x)=\exp(-x^2/(2\sigma_b^2))$, death rate $d(x)=0$ and competition kernel $c(x,y)=\exp(-(x-y)^2/(2\sigma_c^2))$. We expect the population to evolve into the reproduction optimum at $0$ for all choices of $\sigma_b^2,\sigma_c^2>0$. If the concentration of the competition kernel is higher than that of the birth rate (i.e. $\sigma_b^2>\sigma_c^2$), we expect to see evolutionary branching since the loss in reproduction away from $0$ is compensated by a reduction in death from competition. This is also confirmed by computing the derivatives from the branching criterion.\\
	
	Firstly, we want to consider the approximation of the function $g_\delta$ by the canonical equation as predicted by Theorem \ref{Thm: CEAD}. In  this setting, let $\delta = 0.01$ and $T=10$. Further, let $\sigma_b^2=1.8$ and $\sigma_c^2=1.5$. Then we obtain the graphs for $x(t)$ and $g_\delta(t/\delta^2)$ as shown in Figure \ref{Fig: Comparison 1dim CEAD}. \begin{figure}[htbp]
		\begin{tikzpicture}
			\node (img1)  {\includegraphics[width=0.5\linewidth]{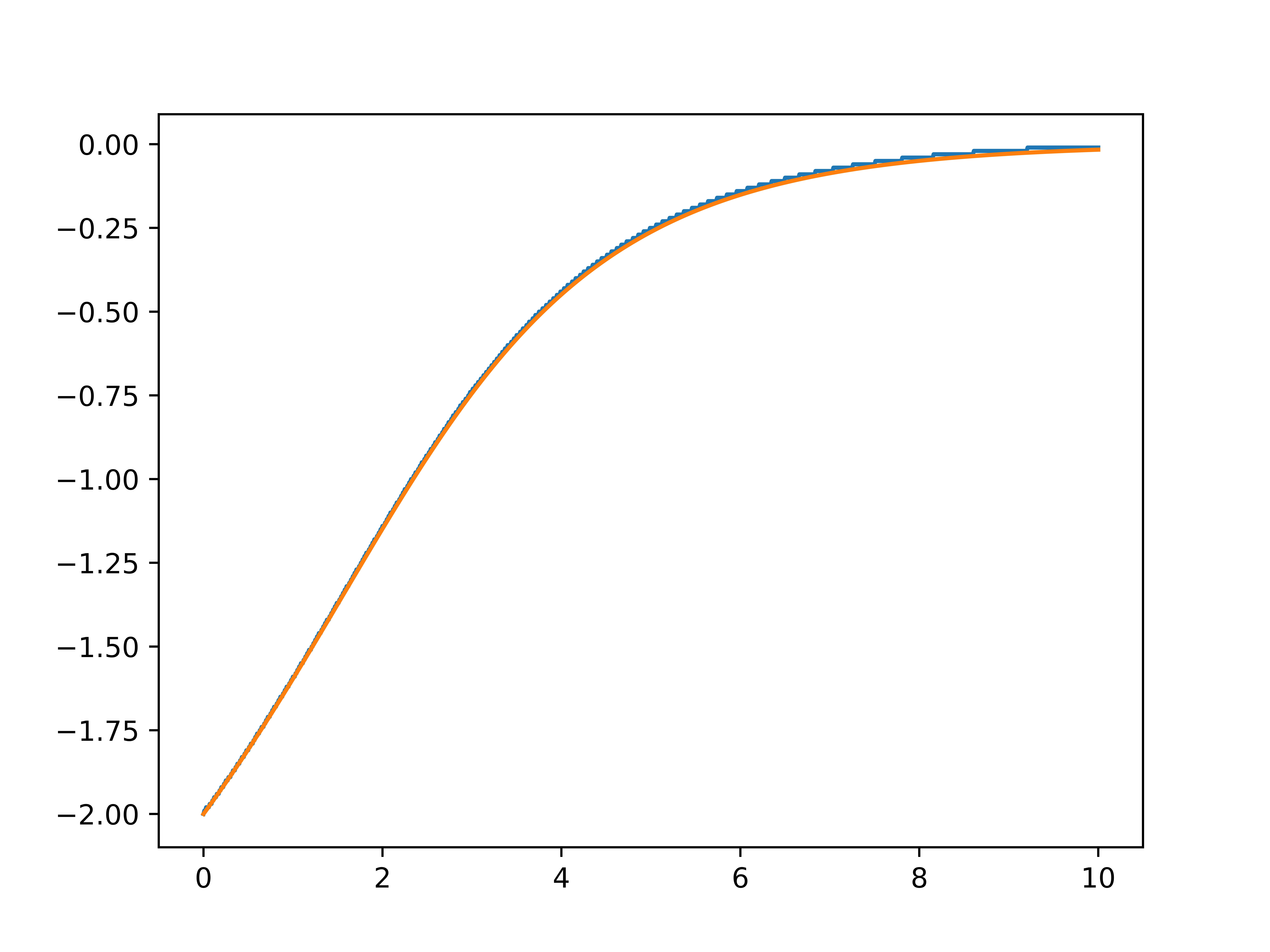}};
			\node[below=of img1, node distance=0cm, yshift=1.5cm, xshift=0.0cm] {\footnotesize Time $t$};
			\node[left=of img1, node distance=0cm, rotate=90, anchor=center,yshift=-1.3cm] {\footnotesize Trait};
		\end{tikzpicture}
	\caption{The solution $x(t)$ (orange) of the canonical equation and the function $g_\delta(t/\delta^2)$ (blue) for $\delta=0.01$ both started at $x_0=-2$.}
	\label{Fig: Comparison 1dim CEAD}
	\end{figure}
	As we would expect, the difference between $x(t)$ and $g_\delta(t/\delta^2)$ is bounded by $\delta$ since this is the furthest a point can be from a (directed) $\delta$-grid in one dimension. \\
	
	Secondly, we want to consider an illustration of evolutionary branching. Our choice of parameters satisfies the branching criterion and as shown in our consideration for the canonical equation, we reach a neighbourhood of the singularity in (real time) $10^5$ time steps for $\delta = 0.01$. Therefore, we expect evolutionary branching to occur around this time and we expect the branching to be symmetric around $0$ by the deterministic nature of our functions. Indeed, we find the behaviour shown in the left image of Figure \ref{Fig: 1dim branching}.
	\begin{figure}[htbp]
		\begin{minipage}{0.48\linewidth}
			\hspace{0.8cm}\begin{tikzpicture}
				\node (img1)  {\includegraphics[width=1.08\linewidth]{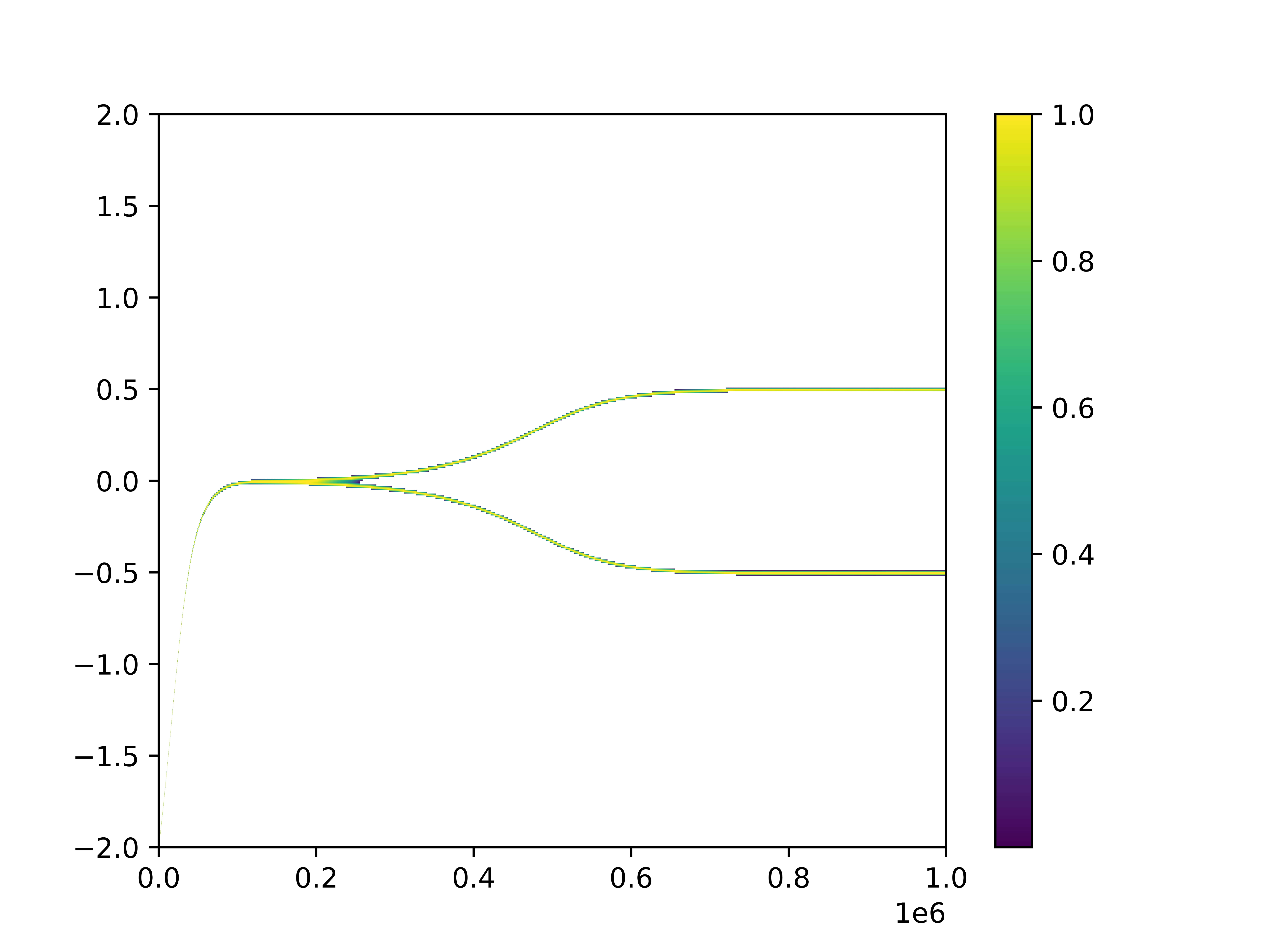}};
				\node[below=of img1, node distance=0cm, yshift=1.5cm, xshift=-0.7cm] {\footnotesize Time $t$ };
				\node[left=of img1, node distance=0cm, rotate=90, anchor=center,yshift=-1.3cm] {\footnotesize Trait};
			\end{tikzpicture}
		\end{minipage}
	\begin{minipage}{0.48\linewidth}
		\hspace{0.8cm}\begin{tikzpicture}
			\node (img1)  {\includegraphics[width=1.08\linewidth]{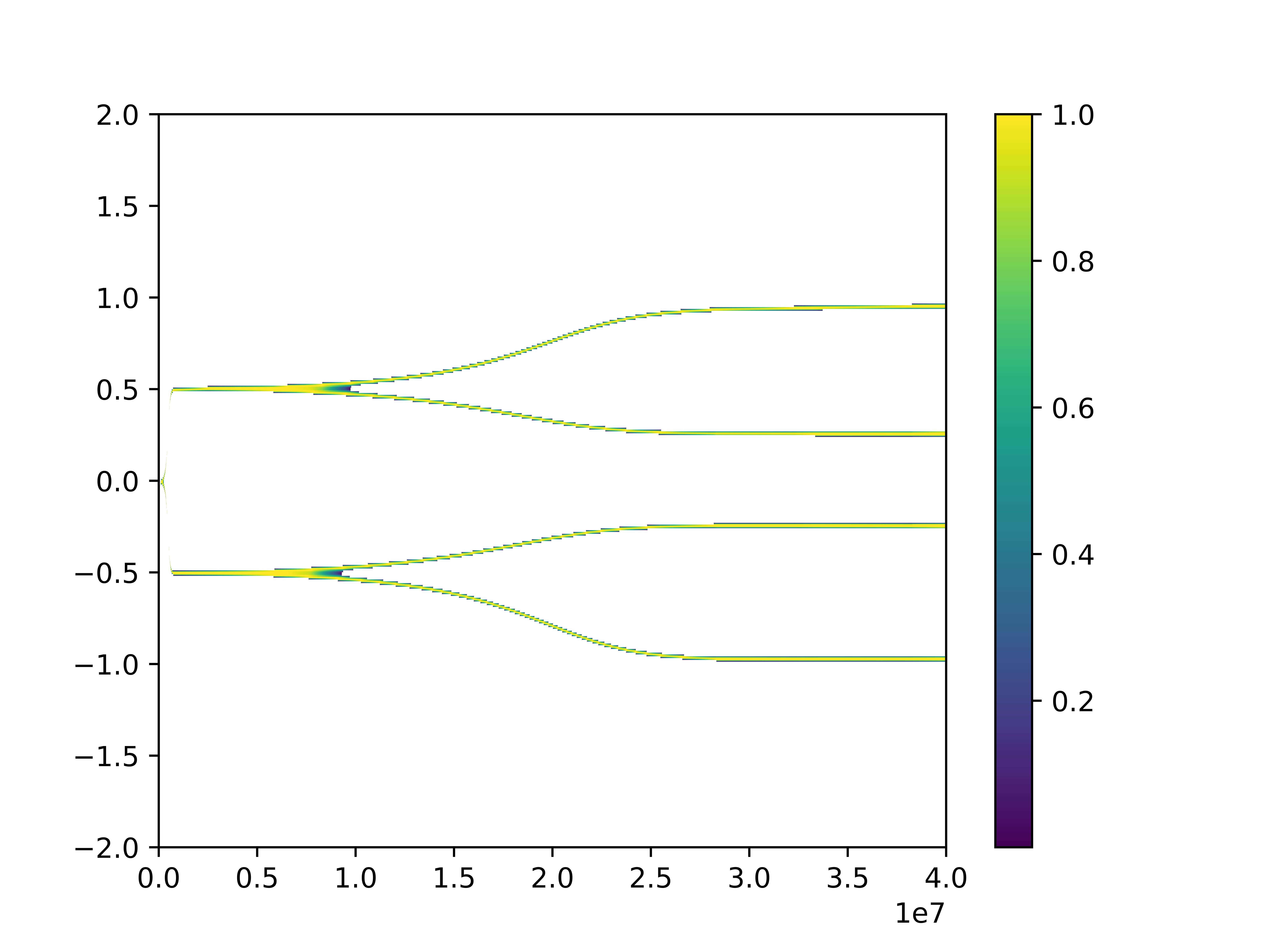}};
			\node[below=of img1, node distance=0cm, yshift=1.5cm, xshift=-0.7cm] {\footnotesize Time $t$ };
			\node[left=of img1, node distance=0cm, rotate=90, anchor=center,yshift=-1.3cm] {\footnotesize Trait};
		\end{tikzpicture}
	\end{minipage}
		\caption{The evolution of $\beta_x^\delta$ over time. Colour indicates the value of $\beta_x^\delta$ at a given time $t$ for a trait $x$. For better legibility of the figure, we assign white to represent traits with $\beta_x^\delta=0$. The process is started with $\beta_{-2}^\delta(0) = 1$ and $\beta_x^\delta(0) = 0$ for $x\neq -2$. The time horizon is $T=10^6$ (left) and $T=4\times 10^7$ (right).}
		\label{Fig: 1dim branching}
	\end{figure}
	The resident trait reaches a neighbourhood of the evolutionary singularity $0$ in a short period relative to the time it takes for the branching to reach its end. This is due to the fact that at first the invading traits only need to compete against a single resident trait and have both a higher rate of reproduction and a lower exposure to competition compared to individuals of the resident trait. During the branching phase however the mutants advantage is reduced to the difference between loss in reproduction and reduced competition with the resident traits. Hence it takes longer for mutants to be successful during this phase. On an even longer time scale, we see two subsequent branching events (see right image of Figure \ref{Fig: 1dim branching}). These are not symmetric anymore since the branches in the middle experience much higher competition than those on the outsides.

\section{The canonical equation  in two dimensions}\label{Sec: 2dim}
	We will now concern ourselves with the corresponding result of Theorem \ref{Thm: CEAD} for a two dimensional trait space. Higher dimensional trait spaces are necessary to investigate the effects of different components of a trait such as reproduction rate, tolerance of competitive pressure, dormancy or horizontal transfer as was done for example in \cite{BPT23}. While trade-offs between these features can also be understood in a one-dimensional trait space as in \cite{CMT21} or \cite{DD99}, this requires a simplification of the interactions. For our purpose, let now $\Xcal = (I\times J)\cap\delta\Z^2$ be the grid of size $\delta$ on a rectangle formed by the intervals $I,J\subseteq\R$. Concerning mutations, we will only allow mutations to the direct neighbours i.e.~we see mutations from trait $(x_1,x_2)$ to $(x_1\pm\delta,x_2)$ or $(x_1,x_2\pm\delta)$ with equal probability $K^{-\alpha}/4$ at birth. The birth function $b(x)$, death rate $d(x)$ and competition kernel $c(x,y)$ are again as before but now depending on the two dimensional traits $x,y\in\Xcal$.\\
	
	Since the result \cite[Theorem 2.2]{CKS21} (formulated in this paper in Theorem \ref{Thm: CKS21}) holds for a general trait space, we can apply their result to this new situation as well and obtain a family of piecewise affine functions $\beta$. Again, we will let $\alpha\to 1$ to obtain the functions $\beta_x^\delta$ as detailed in equation \eqref{eq: PES}. A key component to the derivation of the canonical equation in one dimension was to assume that we are neither at a local fitness maximum nor at a local fitness minimum so that we know in which direction the resident trait will evolve in trait space. However, in a two dimensional trait space, there may still be more than one invading trait even outside of a fitness minimum. A generic example is that mutations both along the first component and along the second component have an evolutionary advantage over their parental trait. This could lead to situations in which again one of the invading traits becomes resident and hence affects the fitness of the secondary invading trait. We want to avoid such situations in order to have control over the sequence of resident traits by imposing another assumption.
	
	\begin{assumption}\label{asmpt: simplification}
		We assume that for any of the four possible mutant traits $y_1,\ldots, y_4$ of the parental trait $x$, the trait with the highest fitness will become resident and all the remaining mutant traits have a negative fitness against this new resident and will go to extinction, even when there are further changes in the resident trait prior to extinction. If we assume as an example $f(y_i,x)>f(y_{i+1},x)$ for $i=1,2,3$, then we assume $y_1$ to become the resident trait following $x$ and $f(y_{i},y_1)<0$ for $i=2,3,4$. However, since further changes in the resident trait could lead to a positive fitness of the mutant traits $y_2,y_3,y_4$, we assume that these traits go to extinction instead.
	\end{assumption}
	Assumption \ref{asmpt: simplification} is the analogue of Assumption \ref{asmpt: well defined}, but this time we do not only need to control traits which are resident at one point and then have decreasing fitness, but we also need to control all the mutant traits which reach strictly positive exponents. While this is quite a strong restriction, it prohibits ``jumps'' in trait space, that is, we avoid the situation in which the resident trait jumps by more than $\delta$ in either of the coordinates. However, it also appears natural to assume that usually the trait with the highest fitness out of a selection of possible mutants will also be fit against the competing traits.\\
	
	In the following, we again want to concern ourselves with the sequence of resident traits over time. For this, observe that the fitness function $f$ dictates the next invading trait. In particular, the gradient with respect to the first component tells us, in which direction the new invading trait lies. If the first coordinate of the gradient $\nabla_1f(x,x)$ at a point $x\in\Xcal$ is larger than the second coordinate, we will see the next resident trait in the first coordinate. As long as the coordinates of $\nabla_1 f$ are unequal, we know in which direction the next resident trait lies. However, there may be a set where the coordinates of $\nabla_1 f$ are equal. To deal with this case, we introduce the notion of an \emph{attractive curve} given the fitness function $f$. \begin{defn}\label{Defn: attractive}
		For a vector $x\in\R^\ell$ we write $[x]_i$ for the $i$-th component of $x$.
		Consider the set $\Mcal:=\{(x,g(x))\mid x\in D\}$ for some domain $D\subseteq\R$ and a differentiable and monotone function $g:D\to\R$, i.e.~$\Mcal$ is the graph of $g$ on the domain $D$. Then we call the set $\Mcal$ \emph{attractive for $f$} if one of the following holds: \begin{itemize}
			\item The function $g$ is monotonically increasing and for any sufficiently small $\varepsilon>0$ we have \[
			\pm[\nabla_1 f((x\mp\varepsilon,g(x)),x\mp\varepsilon,g(x))]_1>|[\nabla_1 f((x\mp\varepsilon,g(x)),x\mp\varepsilon,g(x))]_2|\]
			 as well as \[\pm[\nabla_1 f((x\pm\varepsilon,g(x)),x\pm\varepsilon,g(x))]_2>|[\nabla_1 f((x\pm\varepsilon,g(x)),x\pm\varepsilon,g(x))]_1|.\] 
			\item The function $g$ is monotonically decreasing and for any sufficiently small $\varepsilon>0$ we have \[
			\pm[\nabla_1 f((x\mp\varepsilon,g(x)),x\mp\varepsilon,g(x))]_1>|[\nabla_1 f((x\mp\varepsilon,g(x)),x\mp\varepsilon,g(x))]_2|\]
			as well as \[\mp[\nabla_1 f((x\pm\varepsilon,g(x)),x\pm\varepsilon,g(x))]_2>|[\nabla_1 f((x\pm\varepsilon,g(x)),x\pm\varepsilon,g(x))]_1|.\] 
	\end{itemize}
	The signs in the individual displays are fixed by the leading sign on the left hand side of the inequalities, i.e.~in each display either all signs are from the top row or from the bottom row. Also, for the two displays in each case, the choice of the row of signs must be the same. As an example, if we choose ``+'' as the leading sign on the left hand side of the first display of the second case, then we need to choose ``-'' as leading sign in the second display.
	\end{defn}
	From this definition we see that we would need to distinguish many cases to formulate our results in all generality. For simplicity, we will assume the first case of the definition to be satisfied with positive leading signs on the left hand side of the displays. The definition is best understood with a simple image. We illustrate attractiveness for the line determined by $\Mcal = \{(x,x)\mid x\in[0,1]\}$ in Figure \ref{Fig: example stable curve} and a generic fitness function $f$. Intuitively, the function showing the current resident trait will see a movement through trait space in the direction of the component of $\nabla_1f$ which has the largest absolute value. A curve then is attractive if we have a drift towards the curve given by $\Mcal$. In the example, the line $\Mcal$ is attractive if for all $x$ in the yellow area, $[\nabla_1f(x,x)]_1>[\nabla_1f(x,x)]_2\geq0$ and the same holds true for the second coordinate for all $x$ in the black area. This would see a movement from the bottom left to the top right along the curve because the yellow area indicates the resident trait to evolve towards the right while the black area promotes new resident traits which have a larger second component. Of course, we could also go in the reverse direction, which is the first case of the definition with leading negative signs. If these conditions do not hold, we will see the resident trait move away from the curve in a straight line.
	
	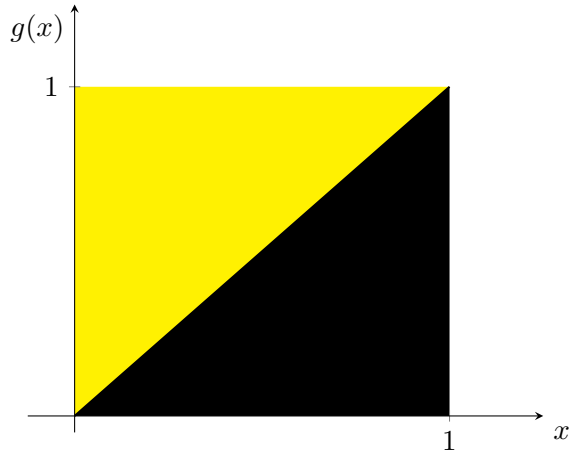
\begin{figure}[htbp]
		\begin{tikzpicture}[declare function={a=0.5;b=3;f(\x)=\x;}]
			\begin{axis}[axis lines=middle,axis on top,xlabel=$x$,ylabel=$g(x)$,
				xmin=-0.5,xmax=5,ymin=-0.2,ymax=5,xtick={4},xticklabels={$1$},
				ytick={4},yticklabels={$1$},
				every axis x label/.style={at={(current axis.right of origin)},anchor=north west},
				every axis y label/.style={at={(current axis.above origin)},anchor=north east}]
				\addplot[name path=A,black,thick,domain=0:4,samples=100] {f(x)};
				\path[name path=B] (0,4) -- (4,4);
				\path[name path=C] (\pgfkeysvalueof{/pgfplots/xmin},0) -- (\pgfkeysvalueof{/pgfplots/xmax},0);
				\addplot [yellow] fill between [
				of=A and B,soft clip={domain=0:4},
				];
				\addplot [black] fill between [
				of=A and C,soft clip={domain=0:4},
				];
				;
			\end{axis}
		\end{tikzpicture}
		\caption{An example of a set $\Mcal$ dividing the square $[0,4]^2$ into two parts. The yellow area represents a drift to the right, the black area represents an upward drift.}
		\label{Fig: example stable curve}
	\end{figure}
	We now have all the necessary prerequisites to formulate our theorem for the canonical equation in two dimensions. In order to keep the theorem legible, we will only formulate it for the case in which both coordinates of $\nabla_1f$ are non-negative. For other cases, suitable absolute values and leading negative signs need to be introduced where appropriate but they will yield the same result.  \begin{theorem}\label{Thm: 2dim CEAD}
		Denote the function tracing the unique resident trait at time $t$ by $h_\delta(t)$, where at times of a change in the resident trait, we choose $h_\delta$ to take the value of the new resident trait so that $h_\delta$ is càdlàg.
		
		 Suppose that Assumption \ref{asmpt: simplification} is satisfied and assume the functions $b,d\colon I\times J\to\R$ and $c\colon (I\times J)^2\to\R$ to be twice continuously differentiable. Define the fitness function $f:(I\times J)^2\to\R$ as in \eqref{eq: fitness}. Further, let $x_0\in I\times J$ be such that $\nabla_1f(x_0,x_0)$ is not a multiple of $(1,1)$. Lastly, assume that the gradient $\nabla_1f(x,x)$ is a coordinate-wise non-negative vector for all $x\in I\times J$ and the set $\Mcal = \{(x\in I\times J\mid [\nabla_1f(x,x)]_1 = [\nabla_1f(x,x)]_2\}$ can be written as a finite union of graphs of monotone and continuously differentiable functions $g_i:\R\to\R$.
		 
		  Then, for all $T>0$, the sequence of functions $h_\delta(\cdot/\delta^2)$ converges uniformly as $\delta\to 0$ in the space of càdlàg functions $\D([0,T],\R^2)$ to the unique function $x \colon [0,T]\to\R^2$ with $x(0)=x_0$ and which is characterized piecewise in the following way: \begin{enumerate}
			\item If $[\nabla_1f(x_0,x_0)]_1>[\nabla_1f(x_0,x_0)]_2>0$, then there exists a time $s_1\in (0,\infty]$ such that $[\nabla_1f(x(s_1),x(s_1))]_1=[\nabla_1f(x(s_1),x(s_1))]_2$ and for all $t<s_1$ we have \[
			\dot{x}(t)=\begin{pmatrix}
				[\nabla_1f(x(t),x(t))]_1\\ 0
			\end{pmatrix}
			\]
			and if $[\nabla_1f(x_0,x_0)]_2>[\nabla_1f(x_0,x_0)]_1>0$, then we have \[
			\dot{x}(t)=\begin{pmatrix}
				0\\ [\nabla_1f(x(t),x(t))]_2
			\end{pmatrix}.
			\]
			\item After time $s_1$, \begin{itemize}
				\item if the set $\Mcal$ is attractive for $f$, the function values of $x$ will trace the curve given by the points in the set $\Mcal$, i.e. $x(t)\in\Mcal$ for $t\geq s_1$ sufficiently small. The speed at which the function $x$ moves along this curve is given by \[
				\frac{	[\nabla_1f(x(t),x(t))]_1}{\sin(\angle x(t))+\cos(\angle x(t))} = \frac{	[\nabla_1f(x(t),x(t))]_2}{\sin(\angle x(t))+\cos(\angle x(t))},
				\] where $\angle x$  denotes the angle of the tangent line of the curve $\Mcal$ at $x\in\Mcal$ to the positive real axis modulo $\pi/2$. Hence, if $T_M(x)$ denotes the normalized tangent vector with appropriate orientation of $M$ at $x$, then we have \begin{align}
				\dot{x}(t)=\frac{	[\nabla_1f(x(t),x(t))]_1}{\sin(\angle x(t))+\cos(\angle x(t))}\cdot T_M(x(t)). \label{eq: 2dimCEAD}
				\end{align} This occurs until a time $s_2\in (s_1,\infty]$, where the tangent line of the set $\Mcal$ is a horizontal or vertical line in trait space or the edge of the trait space is reached.
				\item if the set $\Mcal$ is not attractive for $f$, then the function $x$ will continue in a straight line in the direction of the initially non-dominating coordinate at the appropriate speed given in (1).
			\end{itemize} 
			\item If the tangent line at time $s_2$ is horizontal, then the function $x$ will continue horizontally as described in (1). If the tangent is vertical, then we will see the analogue continuation of $x$ in the direction of the second component. If we have reached the boundary of the trait space, then we will see $x$ continue also in only one dimension along the coordinate which is not on the boundary.
			\item The phases (2) and (3) alternate until eventually $\nabla_1f(x(t),x(t))=0$.
		\end{enumerate}
	\end{theorem}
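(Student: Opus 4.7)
The plan is to decompose the trajectory into finitely many phases, each of which is either one-dimensional (so that Theorem~\ref{Thm: CEAD} applies after a change of coordinate) or consists of a zig-zag motion along the attractive curve $\Mcal$, and then glue the pieces together by a Gronwall-type argument. The main work is to verify the speed formula along $\Mcal$ and to check that each transition time $s_1,s_2,\dots$ converges properly as $\delta\to 0$.

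First I would analyze the initial one-dimensional phase. Under the standing assumption that $[\nabla_1f(x_0,x_0)]_1>[\nabla_1f(x_0,x_0)]_2>0$ (the other case is symmetric), Assumption~\ref{asmpt: simplification} forces the next resident to be $x+(\delta,0)$ because among the four possible mutants the one in the direction of the larger fitness gradient component strictly dominates. Consequently the sequence of resident traits moves only in the first coordinate, and the full argument of Theorem~\ref{Thm: CEAD}, namely piecewise affine interpolation of $h_\delta$, right/left difference quotient $f((y+\delta,y_2),(y,y_2))/\delta\to[\nabla_1f((y,y_2),(y,y_2))]_1$, and Gronwall on the remainder, establishes uniform convergence to $\dot x(t)=([\nabla_1f(x(t),x(t))]_1,0)$ on this interval. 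The stopping time $s_1^\delta$, at which the grid trajectory first lands on a point $y^\delta$ with $[\nabla_1f(y^\delta+(\delta,0),y^\delta)]_1\le[\nabla_1f(y^\delta+(0,\delta),y^\delta)]_2$, converges to the first hitting time $s_1$ of $\Mcal$ because $\nabla_1f$ is continuous and transverse to $\Mcal$ (by the non-degeneracy of the initial condition).

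Second, I would analyze the attractive phase on $\Mcal$. Since $\Mcal$ is a graph of a monotone $C^1$ function, attractiveness means that from a grid point slightly above $\Mcal$ the next resident lies below (a $+\delta$ move in the first coordinate), while from a point slightly below $\Mcal$ the next resident is above (a $+\delta$ move in the second coordinate). Hence the grid trajectory alternates between horizontal and vertical $\delta$-steps and stays within $O(\delta)$ of $\Mcal$. To compute the effective speed, fix $x\in\Mcal$ with tangent angle $\theta=\angle x\in[0,\pi/2)$: to progress along $\Mcal$ by arc length $\Delta s$ the trajectory must take approximately $\Delta s\cos\theta/\delta$ first-coordinate steps and $\Delta s\sin\theta/\delta$ second-coordinate steps. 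Each first-coordinate step requires unrescaled time $1/f(x+(\delta,0),x)\approx 1/(\delta[\nabla_1f(x,x)]_1)$ and similarly for second-coordinate steps, so after rescaling time by $\delta^2$, the total time to advance by $\Delta s$ is
\begin{equation*}
\Delta t \;=\; \frac{\Delta s\cos\theta}{[\nabla_1f(x,x)]_1}+\frac{\Delta s\sin\theta}{[\nabla_1f(x,x)]_2}\;=\;\frac{(\cos\theta+\sin\theta)\Delta s}{[\nabla_1f(x,x)]_1},
\end{equation*}
using $[\nabla_1f(x,x)]_1=[\nabla_1f(x,x)]_2$ on $\Mcal$. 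Passing to the limit $\delta\to 0$ gives exactly the ODE \eqref{eq: 2dimCEAD} with the tangent-vector orientation inherited from the monotonicity of the underlying graph. If $\Mcal$ fails to be attractive, the analogous calculation shows that the fitness gradient never becomes equal at the grid points visited, the motion stays in the first coordinate, and one returns to the 1D regime.

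Third, I would glue the phases. As the trajectory evolves, phase~(2) ends at a time $s_2$ when the tangent of $\Mcal$ becomes axis-aligned (so one-coordinate motion resumes and phase~(3) is a further application of Theorem~\ref{Thm: CEAD} along that axis), or when the boundary of $I\times J$ is reached (treated identically), or when $\nabla_1f(x,x)=0$ (the trajectory stops). Because $\Mcal$ is by hypothesis a finite union of monotone $C^1$ graphs, there are only finitely many such transition times in $[0,T]$, and on each interval between transitions both $h_\delta(\cdot/\delta^2)$ and $x$ satisfy the respective ODE up to an error $r(\cdot,\delta)$ that is uniformly $o(1)$ as $\delta\to 0$. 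Applying Gronwall on each subinterval and using that the convergence at the left endpoint of each subinterval follows from the right endpoint of the previous one yields uniform convergence on $[0,T]$. The main obstacle I anticipate is the rigorous treatment of the zig-zag regime: one must show that the trajectory cannot drift away from $\Mcal$ by more than $O(\delta)$ over macroscopic times, which requires verifying that the fitness gradients at grid points just above and below $\Mcal$ point towards $\Mcal$ uniformly, and controlling the transition points where the tangent angle crosses a multiple of $\pi/2$. Secondary subtleties are the potential loss of monotonicity of $g_i$ at points where $\Mcal$ changes branches, and ensuring that Assumption~\ref{asmpt: simplification} is not violated precisely at grid points lying on $\Mcal$, which is guaranteed outside a negligible set of $\delta$.
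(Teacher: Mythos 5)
Your proposal follows the same phase decomposition and the same step-counting computation of the effective speed along $\Mcal$ that the paper uses, and your Gronwall gluing at the end matches the paper's strategy as well. The one small difference is viewpoint: you count horizontal and vertical steps to cover a macroscopic arc length $\Delta s$ and pass to the limit, whereas the paper works at the microscopic scale $\varepsilon\sim\delta$ between two consecutive crossings $x_1,x_3\in\Mcal$ and explicitly tracks error terms $r_1,\dots,r_5$ before letting $\delta\to 0$; the computations are algebraically identical and yield the same limiting speed $[\nabla_1 f]_1/(\sin\theta+\cos\theta)$. Your concluding list of ``anticipated obstacles'' --- $O(\delta)$ confinement near $\Mcal$, handling tangent angles crossing multiples of $\pi/2$, branch changes of the $g_i$, and degenerate grid alignments on $\Mcal$ --- is precisely the set of technical points the paper also glosses with similar brevity, so you have not skipped anything the paper itself addresses in detail.
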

\begin{remark}
	Theorem \ref{Thm: 2dim CEAD} should also be true without Assumption \ref{asmpt: simplification}. Intuitively, due to smoothness of $f$ in both arguments, we will not see ``large'' jumps in trait space for $\delta$ sufficiently small. However, it is difficult to quantify this and hence we prove the theorem with this additional assumption.\\

\end{remark}
\begin{proof}[Proof of Theorem \ref{Thm: 2dim CEAD}]
	We give a proof of the piecewise convergence by considering the different phases. In the beginning, due to the assumption on $x_0$, the gradient $\nabla_1f(x_0,x_0)$ has unequal coordinates. W.l.o.g.~we assume the first coordinate to have larger absolute value. Then, by Assumption \ref{asmpt: simplification} we will see one dimensional evolution in trait space and hence can apply Theorem \ref{Thm: CEAD}. By the intermediate value theorem, there may exist a distinguished point $x_0+(a,0)$ for some $a>0$ at which the coordinates of $\nabla_1f$ become equal. This point is reached by the function $x$ at a time $s_1>0$. Beyond this point, we will start to see successful invasions from mutations in the direction of the second coordinate and hence we cannot reduce our arguments to the one dimensional case.\\
	
	We find that the curve described by the set $\Mcal$ divides the trait space into distinct regions where on either side we have $[\nabla_1f(y,y)]_1>[\nabla_1f(y,y)]_2$ or vice versa. Assuming without loss of generality the former case to apply, this implies for sufficiently small $\delta>0$ that \[
	 f((y_1+\delta,y_2),(y_1,y_2))>f((y_1,y_2+\delta),(y_1,y_2)).
	\]
	With Assumption \ref{asmpt: simplification} this shows that the resident trait will evolve in the direction of the first component until the line $\Mcal$ is crossed again and  since $\Mcal$ is attractive for $f$, the inequality is reversed. In particular, we see that the resident trait will not leave a $\delta$-neighbourhood of the set $\Mcal$ during this phase. However, if the tangent line of $\Mcal$ becomes a horizontal or vertical line, then this crossing of $\Mcal$ does not occur anymore and we will see evolution in one coordinate again for which the proof of case $(1)$ applies. Also, if $\Mcal$ is not attractive for $f$, we will see only one-dimensional evolution by definition of attractiveness but now in the direction of the second coordinate.\\
	
	To give proof for the speed at which the limiting function moves along the curve given by the set $\Mcal$, we first consider the case of a straight line. Suppose that $\Mcal = \{(x,ax+b)\in\R^2\mid x\geq0 \}$ for some fixed $a\geq 0$ and $b\in\R$. Further, let $\varphi\in[0,\pi/2)$ denote the angle between the line $\Mcal$ and the positive real axis. Since we assume $\Mcal$ to be attractive for $f$, w.l.o.g.~we assume the first case of Definition \ref{Defn: attractive} to hold. Now, consider a point $x_1\in\Mcal$ and suppose that there is a sequence of $\delta$-grids such that we cross $\Mcal$ at $x_1$. W.l.o.g., $h_\delta$ crosses $\Mcal$ at $x_1$ vertically. Let $\delta_n\to 0$ be a corresponding sequence such that $[x_1]_1 \in\delta_n\Z$. To simplify notation, we consider a fixed $\delta>0$ as an element of this sequence. We calculate the time until the next crossing of $\Mcal$ occurs and subsequently we obtain the speed at which we trace $\Mcal$.\\
	
	For this, we again consider the interpolated version of the function $h_\delta$ which we call $\tilde{h}_\delta$. Recall the notation $\lfloor y\rfloor_\delta$ to mean the largest element of $\delta\Z$ which is less than or equal to $y\in\R$. Then there is a point $x_{1,\delta} = ([x_1]_1,\lfloor [x_1]_2 \rfloor_\delta)$ on the grid such that there is a time $t_0$ with $h_\delta(t_0)=x_{1,\delta}$ and a time \[
	t_2 = t_0 + \frac{1}{f(x_{1,\delta}+\delta\mathbf{e}_2, x_{1,\delta})} 
	\]
	with $h_\delta(t_2)=x_{1,\delta}+\delta\mathbf{e}_2$, where $\mathbf{e}_i$ denotes the  $i$-th unit vector. By construction, there is a time $t_1\in[t_0,t_2)$ such that the interpolation of $h_\delta$ on this time interval satisfies $\tilde{h}_\delta(t_1)=x_1$.\\
	
	Now, let $\varepsilon>0$ be such that $x_2 := x_{1,\delta}+\delta\mathbf{e}_2=x_1+\varepsilon\sin(\varphi)\mathbf{e}_2$. Note that $\varepsilon \leq\delta/\sin(\varphi)$ is dependent on $\delta$ and with $\delta\to 0$ we also have $\varepsilon\to0$. By our assumption on attractivity of $\Mcal$ for $f$ we then find that there is a time $t_3>t_2$ and a point $x_3\in\Mcal$ such that \[
	x_3=x_1+\varepsilon\left(\cos(\varphi), \sin(\varphi)\right)
	\]
	and $\tilde{h}(t_3)=x_3$. Note that $\lVert x_3-x_1\rVert_2 = \varepsilon$.\\
	
	We now calculate the time difference $t_3-t_1$ to determine the speed $\lVert x_3-x_1\rVert_2 / (t_3-t_1)$. Using the interpolation function $\tilde{h}$ we find that the time $t_2-t_1$ satisfies \[
	t_2-t_1 = \frac{\varepsilon\sin(\varphi)}{\delta f(x_{1,\delta}+\delta\mathbf{e}_2, x_{1,\delta})}.
	\]
	
	Next, we cover a distance of $\varepsilon\cos(\varphi)$ units horizontally which sees a total of $\lfloor\varepsilon\cos(\varphi)/\delta\rfloor$ steps on the grid in addition to an interpolation step. Hence, this takes a time of \[
	t_3-t_2 = \frac{\varepsilon \cos(\varphi) - \delta\lfloor\varepsilon\cos(\varphi)/\delta\rfloor}{\delta f((\lfloor [x_3]_1\rfloor_\delta+\delta, [x_3]_2),(\lfloor [x_3]_1\rfloor_\delta, [x_3]_2) )} +\sum_{k=1}^{\lfloor\varepsilon\cos(\varphi)/\delta\rfloor}\frac{\delta}{\delta f(x_2+k\delta\mathbf{e}_2, x_2+(k-1)\delta\mathbf{e}_2)}.
	\]
	Accelerating time by $1/\delta^2$ and with a slight abuse of notation, we see that \[
	\frac{t_2-t_1}{\delta^2}=\frac{\varepsilon\sin(\varphi)}{\delta f(x_{1,\delta}+\delta\mathbf{e}_2, x_{1,\delta})} \quad\Longleftrightarrow\quad t_2-t_1 = \frac{\delta\varepsilon\sin(\varphi)}{f(x_{1,\delta}+\delta\mathbf{e}_2, x_{1,\delta})}
	\]
	and similarly for $(t_3-t_2)/\delta^2$ we get \[
		t_3-t_2 = \frac{\delta(\varepsilon \cos(\varphi) - \delta\lfloor\varepsilon\cos(\varphi)/\delta\rfloor)}{f((\lfloor [x_3]_1\rfloor_\delta+\delta, [x_3]_2),(\lfloor [x_3]_1\rfloor_\delta, [x_3]_2) )} +\sum_{k=1}^{\lfloor\varepsilon\cos(\varphi)/\delta\rfloor}\frac{\delta^2}{f(x_2+k\delta\mathbf{e}_2, x_2+(k-1)\delta\mathbf{e}_2)}.
	\]
	Now, we find that the quotients are approximately the reciprocal of the derivative of $f$. Hence, we write \[
	t_2-t_1 = \varepsilon\sin(\varphi)\left(\frac{1}{[\nabla_1f(x_1,x_1)]_2} + r_1(\delta)\right)
	\]
	for some error function $r_1(\delta)$ which tends to $0$ with $\delta\to0$. Using the continuous differentiability of $f$, we can perform a similar substitution for the time step \[
	t_3-t_2 = (\varepsilon \cos(\varphi) - \delta\lfloor\varepsilon\cos(\varphi)/\delta\rfloor)\cdot\left(\frac{1}{[\nabla_1f(x_3,x_3)]_1}+r_3(\delta)\right) +\sum_{k=1}^{\lfloor\varepsilon\cos(\varphi)/\delta\rfloor}\left(\frac{\delta}{[\nabla_1f(x_2,x_2)]_1} +\delta r_{2,k}(\delta)\right).
	\]
	Since the sum now only has constant terms (except for the error terms $r_{2,k}$), we replace it by setting  \[
	t_3-t_2 = (\varepsilon \cos(\varphi) - \delta\lfloor\varepsilon\cos(\varphi)/\delta\rfloor)\cdot\left(\frac{1}{[\nabla_1f(x_3,x_3)]_1}+r_3(\delta)\right) +\delta\lfloor\varepsilon\cos(\varphi)/\delta\rfloor\left(\frac{1}{[\nabla_1f(x_2,x_2)]_1}+ r_{2}(\delta)\right)
	\]
	for an appropriate error function $r_2$. This is possible since the functions $r_{2,k}$ are uniformly bounded. By continuity of $\nabla_1f$ we again find that -- with the introduction of another error term -- we can replace the points $x_3$ and $x_2$ with $x_1$ to obtain \begin{align*}
	t_3-t_2 = \frac{\varepsilon \cos(\varphi) }{[\nabla_1f(x_1,x_1)]_1} &+ \delta\lfloor\varepsilon\cos(\varphi)/\delta\rfloor r_2(\delta)+ (\varepsilon \cos(\varphi) - \delta\lfloor\varepsilon\cos(\varphi)/\delta\rfloor) r_3(\delta)\\& + \varepsilon \cos(\varphi) r_4(\delta).
	\end{align*}
	Hence, the total time taken to go from $x_1$ to $x_3$ can be written as \[
	t_3-t_1 = \frac{\varepsilon\sin(\varphi)}{[\nabla_1f(x_1,x_1)]_2} + \frac{\varepsilon \cos(\varphi) }{[\nabla_1f(x_1,x_1)]_1}+  \varepsilon\sin(\varphi) r_1(\delta)+\varepsilon\cos(\varphi)r_5(\delta)
	\]
	for some error function $r_5(\delta)$.
	By definition of the set $\Mcal$, at $x_1\in\Mcal$ we have $[\nabla_1f(x_1,x_1)]_2=[\nabla_1f(x_1,x_1)]_1$. Therefore, the speed at which $h_\delta$ travels from $x_1$ to $x_3$ is now determined by \begin{equation}\begin{aligned}
		\frac{\lVert x_3-x_1\rVert_2}{t_3-t_1}&=\frac{\varepsilon}{\frac{\varepsilon\sin(\varphi)}{[\nabla_1f(x_1,x_1)]_1} + \frac{\varepsilon \cos(\varphi) }{[\nabla_1f(x_1,x_1)]_1}+  \varepsilon\sin(\varphi) r_1(\delta)+\varepsilon\cos(\varphi)r_5(\delta)}\\
		&=\frac{[\nabla_1f(x_1,x_1)]_1}{\sin(\varphi) + \cos(\varphi)+  [\nabla_1f(x_1,x_1)]_1\sin(\varphi) r_1(\delta)+[\nabla_1f(x_1,x_1)]_1\cos(\varphi)r_5(\delta)}\\
		&\xrightarrow{\delta\to 0}\frac{[\nabla_1f(x_1,x_1)]_1}{\sin(\varphi) + \cos(\varphi)}.
	\end{aligned}\label{eq: limit}
	\end{equation}
	In conclusion, the infinitesimal speed at which the limiting function moves along the straight line $\Mcal$ is given at the point $x\in\Mcal$ by \[
	\frac{[\nabla_1f(x,x)]_1}{\sin(\varphi) + \cos(\varphi)}.
	\]
	Now, for general graphs $\Mcal$, observe that performing the same calculation via the approximation of the graph with the tangent line incurs an error which is of the order $\varepsilon^2$ due to our assumption on the differentiability of the corresponding functions $g_i$. As we let $\delta\to 0$, this error vanishes in calculation \eqref{eq: limit}. Thus it follows that the limiting function of $h_\delta(t/\delta^2)$ solves \eqref{eq: 2dimCEAD}. Existence and uniqueness of a solution of equation \eqref{eq: 2dimCEAD} follow from the right hand side being a Lipschitz function on the compact set $I\times J$. As in the proof of Theorem \ref{Thm: CEAD}, since the convergence in \eqref{eq: limit} is uniform in $x_1$ (because the error terms result from the approximation of the derivative), we obtain uniform convergence of the function $\hat{h}_\delta(\cdot/\delta^2)$ which interpolates the intersections of the graph of $\tilde{h}_\delta(\cdot/\delta^2)$ with $\Mcal$ to $x(\cdot)$. Since the $L^\infty$ norm of $\hat{h}_\delta-\tilde{h}_\delta$ tends to $0$ with $\delta\to 0$, we also obtain uniform convergence of $\tilde{h}_\delta(\cdot/\delta^2)$ and hence of $h_\delta(\cdot/\delta^2)$ to $x(\cdot)$.
\end{proof}
\begin{remark}
	Note that the slower speed of the canonical equation is the result of the limited directions of mutation. When the curve given by $\Mcal$ is not in line with the grid (i.e.~not a horizontal or vertical line), we cannot exactly trace $\Mcal$ because the mutations need to remain on the grid. If mutations were allowed in all directions, we would recoved the speed $[\nabla_1f(x(t),x(t))]_1$. 
\end{remark}
\begin{remark}
	We want to discuss shortly the case of a general $n$-dimensional trait space. Assuming again that only mutations to the direct neighbours are permitted and there is always a unique invading trait and all other mutants competing for simultaneous invasion die out, we suspect a similar result to hold. Initially we see one-dimensional behaviour until two coordinates of the gradient $\nabla_1 f$ are equal which w.l.o.g.~are the first two coordinates. Then the solution of the canonical equation will (depending on attractiveness) either be deflected to continue in one-dimensional fashion in the second coordinate, or we will see two-dimensional motion along the set determined by $[\nabla_1f]_1=[\nabla_1f]_2$ until these coordinates are equal to a third coordinate of $\nabla_1f$. This will continue until we eventually have all coordinates of $\nabla_1f$ equal and we see $n$-dimensional motion along this curve. The speed at which we move along these sets is still given by $[\nabla_1f]_1$ divided by a term which describes the length of the shortest path on an $n$-dimensional grid connecting two points on a hyperplane with fixed angles which then is the tangent plane. Since such a hyperplane has $n-1$ angles with respect to the first $n-1$ coordinates of space, the formulas will become increasingly complex.
\end{remark}

	\subsection{Examples}\label{Sec: Examples}
	We want to give examples for the evolution of the resident trait over time in a two-dimensional trait space in an attractive case and an unattractive one. The two coordinates of each trait will be responsible for an increase in reproduction and a decrease in susceptibility to competitive pressure. For visualisation, we give an inequality plot comparing the two components of the gradient $\nabla_1f$ as well as a simulation showing the path that the resident trait takes.  \begin{example}\label{Ex: non attractive}
		Consider the case in which we have $I=J=[0,2]$, $b(x,y)=1+x/2$, $d(x,y)=0$ and $c((x,y),(w,z))=1-y/3$. This choice of rates is made in such a way that the functions are as simple as possible with an interesting effect in the chosen interval. Then, the map of inequalities for the coordinates of $\nabla_1f$ is given in Figure \ref{Fig: ex1} (left). Calculating the gradient, we find that the line at which the coordinates of $\nabla_1f$ are equal is given by $y= 1-x$. This model does not have any evolutionary singular strategies and hence the population will adopt the most beneficial trait in the trait space which is $(2,2)$. We are mostly interested in the path taken from the starting trait $(0,0)$ to the final trait $(2,2)$.
			\begin{figure}[htbp]
			\begin{minipage}{0.48\linewidth}
					\includegraphics[width=\linewidth]{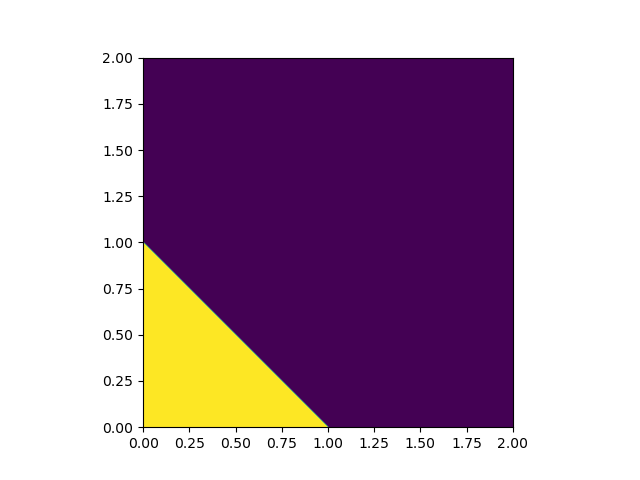}
			\end{minipage}
			\begin{minipage}{0.48\linewidth}
				\includegraphics[width=\linewidth]{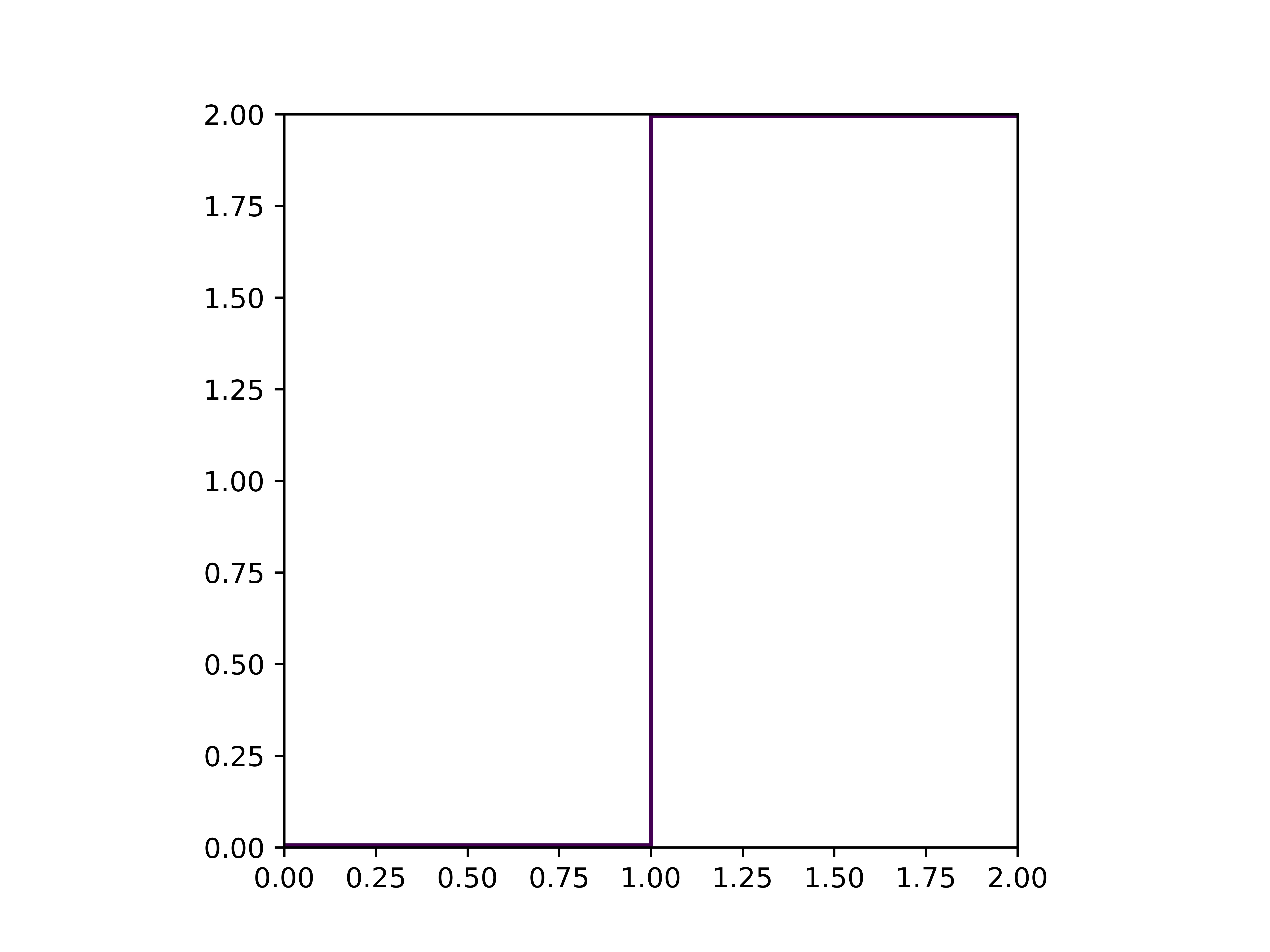}
			\end{minipage}
			\caption{Left: Inequality plot for the parameters in Example \ref{Ex: non attractive}. Both coordinates of the partial derivative are positive. The yellow area satisfies $[\nabla_1f(x,x)]_1>[\nabla_1f(x,x)]_2$ and the black area is $[\nabla_1f(x,x)]_2>[\nabla_1f(x,x)]_1$. In particular, the line where both coordinates are equal is not attractive for $f$. Right: The path in trait space which the sequence of resident traits follows for $\delta=0.011$, started at (0,0).}
			\label{Fig: ex1}
		\end{figure}
	Indeed, we see that the set $\Mcal$ in this case is not attractive and the evolution of the dominating trait takes a turn from going horizontally to going vertically in trait space upon encountering the line $\Mcal$ (Figure \ref{Fig: ex1} (right)). We chose $\delta$ such that we do not have a point exactly on the set $\Mcal$ since this might cause problems with the fitness of both mutant traits being identical (as the derivatives of the fitness are equal).
	\end{example}
	\begin{example}\label{Ex: attractive}
	Now, let $I=J=[0,4]$, $b(x,y)=1+x$, $d(x,y)=0$ and we consider the competition kernel $c((x,y),(w,z))=1+e^{-y}$. Now computing the curve at which the coordinates of $\nabla_1f$ agree, we find that this is the case for $y=\log(x)$. Also, we see from Figure \ref{Fig: ex2} (left), that this curve is attractive for $f$. Again, there is no evolutionary singular strategy and hence the population will evolve to the most beneficial trait, this time being $(4,4)$. The path taken from the initial trait $(0,0)$ in the bottom left corner of the trait space to the final trait $(4,4)$ in the upper right corner is quite different from the one in the previous example.
	\begin{figure}[htbp]
			\begin{minipage}{0.48\linewidth}
			\includegraphics[width=\linewidth]{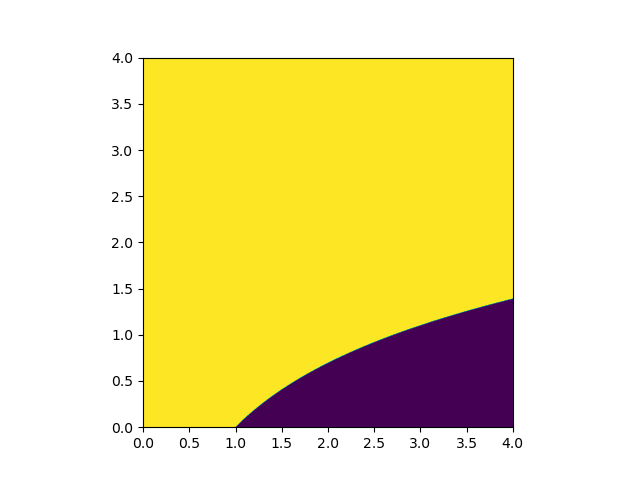}
		\end{minipage}
		\begin{minipage}{0.48\linewidth}
			\includegraphics[width=\linewidth]{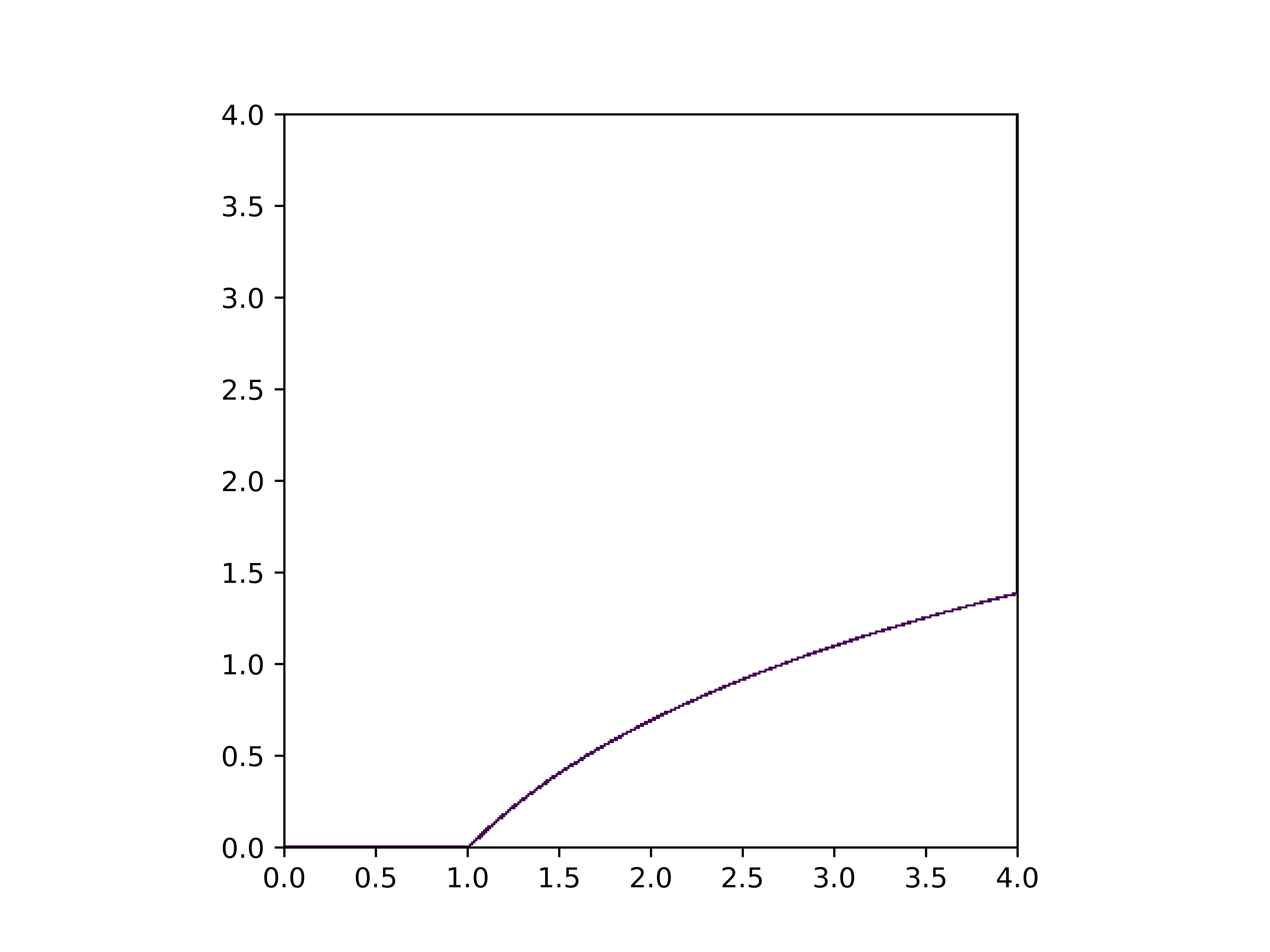}
		\end{minipage}
		\caption{Left: Inequality plot for the parameters in Example \ref{Ex: attractive}. Both coordinates of the partial derivative are positive. The yellow area satisfies $[\nabla_1f(x,x)]_1>[\nabla_1f(x,x)]_2$ and the black area is $[\nabla_1f(x,x)]_2>[\nabla_1f(x,x)]_1$. Here, the set $\Mcal$ is attractive. Right: The path taken by the resident trait for $\delta=0.011$ started at $(0,0)$.}
		\label{Fig: ex2}
	\end{figure}
	\end{example}

 	We plot the Euclidean difference of the solution $x(t)$ to the canonical equation from Theorem \ref{Thm: 2dim CEAD} and the resident trait from our simulation in Figure \ref{Fig: Error}. To improve legibility of this figure, we calculate the difference only at the times when the function $h_\delta$ has jumps and interpolate these errors. 
 	\begin{figure}[htbp]
 		\begin{minipage}{0.45\linewidth}
 		\begin{tikzpicture}
 				\node (img1)  {\includegraphics[width=1.08\linewidth]{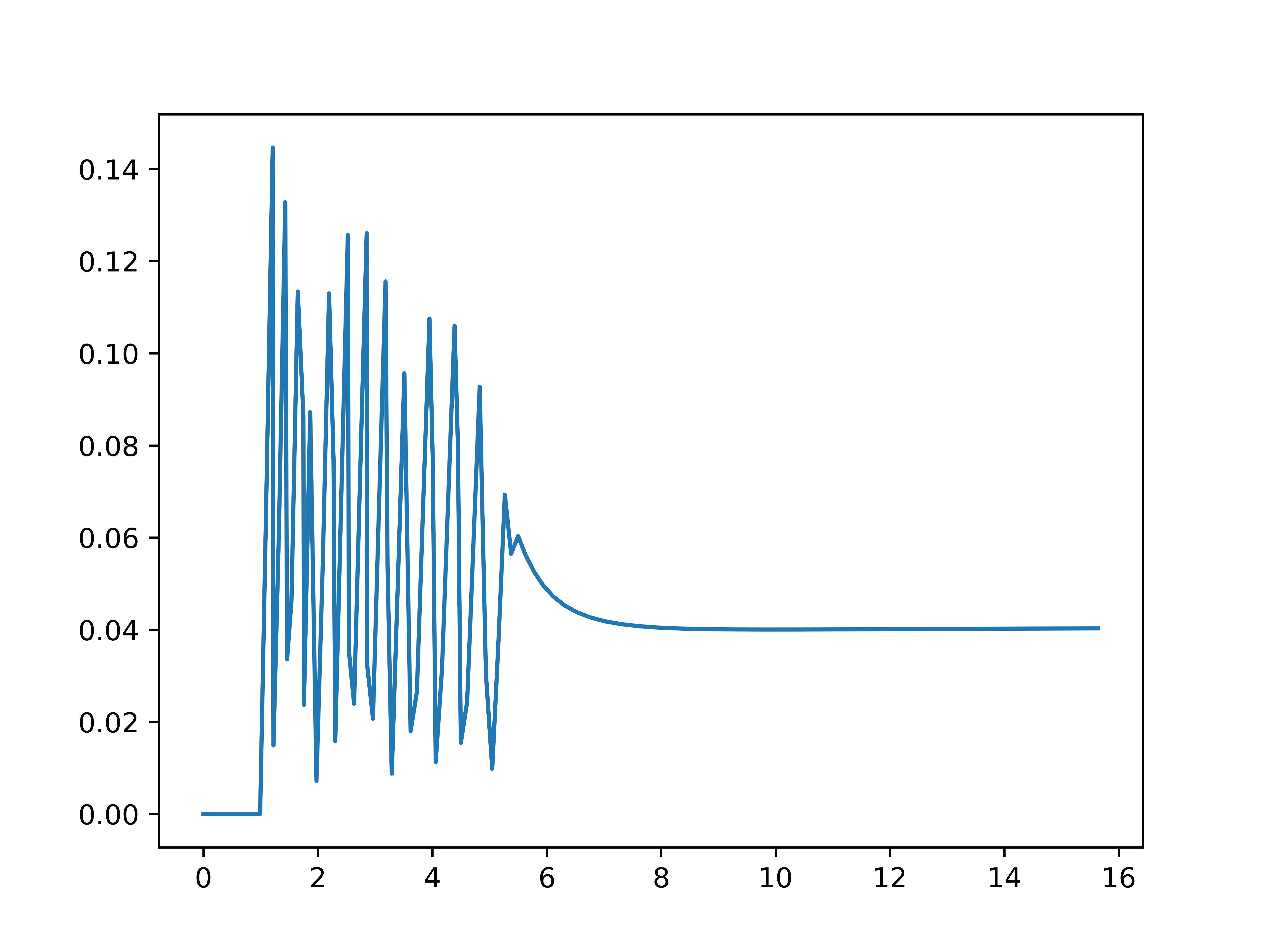}};
 				\node[below=of img1, node distance=0cm, yshift=1.5cm, xshift=0cm] {\footnotesize Time $t$ };
 				\node[left=of img1, node distance=0cm, rotate=90, anchor=center,yshift=-1.3cm] {\footnotesize Error};
 			\end{tikzpicture}
 		\end{minipage}
 		\begin{minipage}{0.45\linewidth}
 			\begin{tikzpicture}
 				\node (img1)  {\includegraphics[width=1.08\linewidth]{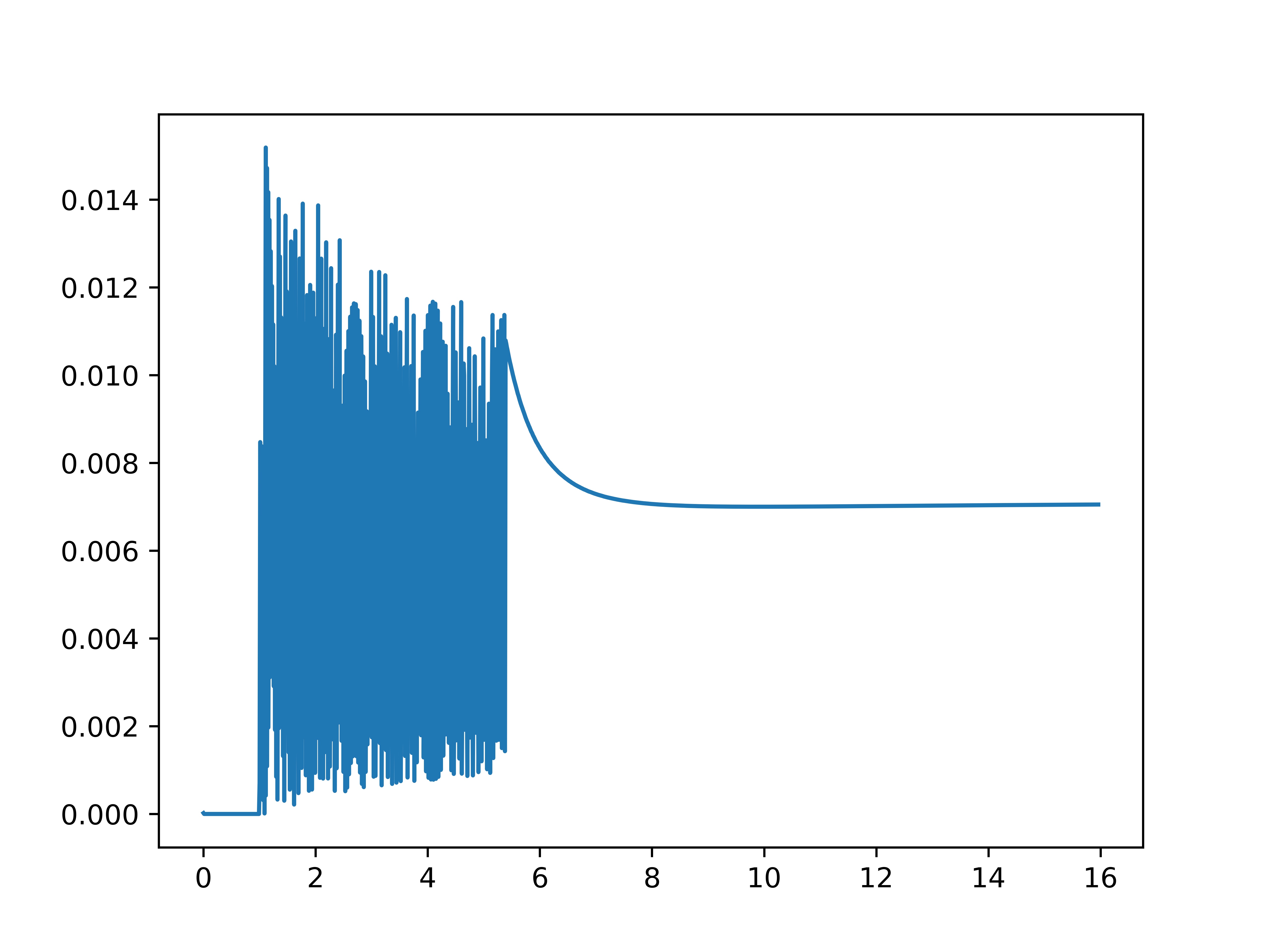}};
 				\node[below=of img1, node distance=0cm, yshift=1.5cm, xshift=-0cm] {\footnotesize Time $t$ };
 				\node[left=of img1, node distance=0cm, rotate=90, anchor=center,yshift=-1.3cm] {\footnotesize Error};
 			\end{tikzpicture}
 		\end{minipage}
 		\caption{Euclidean norm $\lVert x(t)-h_\delta(t/\delta^2)\rVert_2$ for $\delta=0.11$ (left) and $\delta=0.011$ (right). The phase during which $x(t)$ is not a straight line is clearly identifiable.}
 		\label{Fig: Error}
 	\end{figure}
 	As expected, the error is limited by $\sqrt{2}\delta$ which is the furthest distance we should see on a $\delta$ grid from a smooth curve. Interestingly however, we see that at first we have a negligible error until we encounter the critical curve $\Mcal$. Here, the error is erratic since jumps crossing the curve may increase the distance from the curve. As soon as we reach the boundary of our trait space though, the error is smooth again but remains at a non-negligible level due to the $\delta$-grid not being exactly on the edge of the trait space. When both processes reach the upper right corner of the trait space, the error remains at a constant level.
	\appendix
	\section{Technical details for the simulations}\label{Sec:Simulation details}
	Here, we give a short summary of the techniques used to create our examples.
	\subsection{Simulations in one dimension}
	
	The solution $x$ of the canonical equation in Figure \ref{Fig: Comparison 1dim CEAD} was computed by using a simple Euler method and computing the partial derivative of the fitness function $f$ by setting $\partial_1 f(x,x)\approx f(x+h,x)/h$ for $h=10^{-5}$. For the function $g_\delta$, we used the explicit form detailed before Theorem \ref{Thm: CEAD}.\\
	
	For Figure \ref{Fig: 1dim branching} we require the computation of the coexistence equilibria. In accordance with Assumption \ref{asmpt: stability}, we used a damped Newton method with damping factor $0.02$ and initial condition composed of the old equilibrium population for previously resident traits and a small starting population of $0.01$ for invading mutant traits. Computations for the equilibrium were done for at most $10000$ iterations.
	
	\subsection{Simulations in two dimensions}
	
	The inequality plots for Figures \ref{Fig: ex1} and \ref{Fig: ex2} were computed on a $1000\times 1000$ grid on the square $[0,2]^2$ and $[0,4]^2$ respectively by calculating the approximate gradient again using $[\nabla_1f(x,x)]_i\approx f(x+h\mathbf{e}_i,x)/h$ with $h=10^{-5}$ for each of the grid points. If the difference of the components was less than $10^{-5}$, they were set to be equal.\\
	
	The plots showing the path of the resident trait in trait space again required the calculation of various equilibria. Thse were done as for the evolutionary branching images.\\
	
	To calculate the error in Figure \ref{Fig: Error}, we first explicitly computed the speed at which we move along the curve $\Mcal=\{(t,\log(t))\mid t\in[1,4]\}$. This is given in Theorem \ref{Thm: 2dim CEAD} as \[
	\frac{[\nabla_1f(x(t),x(t))]_1}{\sin(\angle x(t))+\cos(\angle x(t))},
	\]
	For $x(t)\in\Mcal$ it is easy to compute $[\nabla_1f(x(t),x(t))]_1=1$. The tangent at the point $x(t)=(a,\log(a))$ is the line uniquely determined by \[
	x(t)+y\cdot\begin{bmatrix}
		1\\ 1/a
	\end{bmatrix},\qquad y\in\R.
	\]
	In particular, the angle with the positive real line is $\arctan(1/a)$. Plugging this into the formula for the speed at which we move along $\Mcal$ and normalizing the tangent vector, we obtain \[
	x'(t)=\frac{[x(t)]_1}{1+[x(t)]_1}\cdot\begin{bmatrix}
		1\\ 1/[x(t)]_1
	\end{bmatrix}
	\]
	Then we separated the three phases and solved that canonical equation piecewise using an Euler scheme with $\Delta t=T\cdot 10^{-6}$ on a time horizon of $T=16$. Having simulated the times at which the resident trait changes, we compared the discrepancy of the simulated resident trait at these times (accelerated by $1/\delta^2$) and the solution of the canonical equation.

	\section*{Acknowledgements}
	The author wants to thank M. Wilke Berenguer for helpful comments on an earlier version of this paper. This work was funded by the Deutsche Forschungsgemeinschaft (DFG, German Research Foundation) under Germany’s Excellence Strategy MATH+: The Berlin Mathematics Research Center, EXC-2046/1 project-ID 390685689.

\end{document}